\documentclass[journal,10pt]{IEEEtran}

\usepackage{cite}
\usepackage{color} 
\usepackage{stfloats}
\usepackage{booktabs} 
\usepackage[table]{xcolor}
\newcolumntype{C}[1]{>{\centering\arraybackslash}m{#1}}

\ifCLASSINFOpdf
   \usepackage[pdftex]{graphicx}

\else

\fi

\usepackage{cite}
\usepackage{amsmath,amsthm,amssymb,graphicx,algorithm,algorithmic}
\usepackage{makecell}
\usepackage{subfigure}
\usepackage{float}
\usepackage{hyperref}
\usepackage{booktabs}

\usepackage{enumerate}

\newtheorem{theorem}{\bf Theorem}

\newtheorem{lemma}{\bf Lemma}

\newtheorem{remark}{Remark}

\usepackage{color}   
\usepackage{array}
\usepackage{multicol}

\begin{document}

\title{Secure Coverage Enhancement in Aerial Reconfigurable Intelligent Surface-Assisted High-Speed Train Communication Systems}

\author{

Changzhu~Liu,~\IEEEmembership{Member,~IEEE},
Ruisi~He,~\IEEEmembership{Senior~Member,~IEEE},
Bo~Ai,~\IEEEmembership{Fellow,~IEEE}, \\
Yong Niu,~\IEEEmembership{Senior Member,~IEEE}, 
Zhu~Han,~\IEEEmembership{Fellow,~IEEE},
Gongpu~Wang,~\IEEEmembership{Member,~IEEE},\\
Haoxiang~Zhang,
Jiahui~Han,
and Zhangdui~Zhong,~\IEEEmembership{Fellow,~IEEE}

\thanks{

Changzhu Liu is with School of Intelligent Engineering and Intelligent Manufacturing, Hunan University of Technology and Business, Changsha 410205, China, and also with the School of Electronics and Information Engineering, Beijing Jiaotong University, Beijing 100044, China (e-mails: liuchangzhu@hutb.edu.cn)


Ruisi He, Bo Ai, Yong Niu and Zhangdui Zhong are with the School of Electronics and Information Engineering, Beijing Jiaotong University, Beijing 100044, China (e-mails: ruisi.he@bjtu.edu.cn; boai@bjtu.edu.cn; niuy11@163.com; zhdzhong@bjtu.edu.cn).

Gongpu Wang is with the School of Computer and Information Technology, Beijing Jiaotong University, Beijing 100044, China (e-mail: gpwang@bjtu.edu.cn).

Zhu Han is with the Department of Electrical and Computer Engineering at the University of Houston, Houston, TX 77004 USA (e-mail: hanzhu22@gmail.com).

Haoxiang Zhang and Jiahui Han are with the China Academy of Industrial Internet, Ministry of Industry and Information Technology, Beijing, China (zhx61778294@126.com; hjh1760708@126.com).

              }
              }

{}

\maketitle

\begin{abstract} %
High-speed trains (HSTs) have become a prominent means of transportation, requiring high data rates and reliable communication services for HST passengers. However, the wireless channels in HST communication systems are susceptible to various security threats, including eavesdropping. Addressing these security concerns is therefore of critical importance. One promising technology for enhancing security is the integration of a reconfigurable intelligent surface (RIS) on an unmanned aerial vehicle, referred to as an aerial reconfigurable intelligent surface (ARIS). This technology offers significant potential for improving wireless network performance, though it also introduces unique challenges in terms of physical layer security (PLS). This paper investigates the PLS of ARIS-aided HST communication systems. A problem of maximizing the weighted sum secrecy rate is formulated by jointly optimizing the active beamforming at the base station (BS) and the phase shift at the ARIS, subject to constrains on the BS transmit power and the unit modulus of the ARIS reflecting coefficient. To address this problem, a joint optimization algorithm is proposed using the block coordinate descent method. Specifically, the problem is decomposed into two subproblems: active beamforming design and ARIS phase shift optimization. The active beamforming is optimally designed via the successive convex approximation technique, while the ARIS phase shift is efficiently updated using the alternating direction method of multipliers technique. Simulation results demonstrate the rapid convergence of the proposed algorithm, which achieves a higher secrecy rate compared to existing methods in the literature.


\end{abstract}
\begin{IEEEkeywords}
High-speed train, aerial reconfigurable intelligent surface, physical layer security, block coordinate descent, alternating direction method of multipliers.
\end{IEEEkeywords}

%
\IEEEpeerreviewmaketitle

\section{Introduction}
\IEEEPARstart{O}{ver} the past decade, the swift development of high-speed trains (HSTs) have significantly improved travel convenience and transformed them into an essential transportation mode, owing to their high efficiency, reliability, exceptional safety, and energy-saving features \cite{c2new1,c2new2,c2new3,c2new4}. Concurrently, high-quality wireless communication with substantial data rates is crucial for providing passenger services on HSTs. The HST wireless communication signal system plays a crucial role in maintaining the safety of trains traveling at high speeds and densities. As HST wireless communication systems continue to evolve in terms of intelligence and information processing, the wireless networks within these systems have become more open, interconnected, and scalable. However, the inherent broadcasting characteristics of wireless signals mean that they can be intercepted within the coverage range, making these networks vulnerable to various security threats such as eavesdropping \cite{c3}. This makes the security of HST wireless networks an urgent challenge that needs to be addressed.

In recent years, reconfigurable intelligent surfaces (RIS) has garnered significant interest in the wireless communications sector as an innovative and revolutionary technology, also being recognized for its energy-efficient potential. An RIS consists of a plane made up of numerous low-cost, nearly passive reflective elements, each equipped with controllable phase shifting capacity \cite{c5}. These individually regulated and easily adjustable reflective elements enable deliberate modifications to the electromagnetic characteristics of the incident signal. Consequently, they can actively manipulate the wireless propagation, facilitating intelligent reflection that directs the signal to the intended receiver, thereby effectively enhancing the received signal strength \cite{c6}.

Recent developments in wireless communication networks have underscored the increasing demand for robust security solutions. Physical layer security (PLS) technique has emerged as a potentially effective approach to addressing security challenges in wireless communications \cite{c9,c10}. PLS leverages the intrinsic properties of physical signals to enhance the security of wireless communications. The diversity and time-varying characteristics of wireless channels across the space-time-frequency domain, combined with the unique and complementary properties of communication channels between transmitters and receivers, provide substantial advantages for PLS technologies. In this context, the ability of RIS to manipulate the propagation environment-enhancing signal strength for legitimate users while reducing the signal strength of illegal users also performs well in improving system security \cite{c12}. RIS introduces additional degrees of freedom for physical layer security enhancement, positioning it as a promising technology for PLS implementation \cite{c13}. Consequently, the integration of PLS and RIS technology for security enhancement has attracted considerable research attention \cite{c14}. However, the limited coverage of conventional RIS constrains its full potential. To address this limitation, integrating RIS with unmanned aerial vehicle (UAV) technology has emerged as a highly promising research direction, leveraging the high mobility of UAVs, inherent line-of-sight (LoS) air-to-ground channels, and on-demand deployment characteristics \cite{c15}. By mounting an RIS on a UAV, referred to as an aerial RIS (ARIS), the resulting UAV-mounted RIS can intelligently reflect signals from the aerial positions \cite{c18}, offering significant advantages over terrestrial RIS. Whereas traditional RIS is limited to reflecting signals within a 180-degree half-space, ARIS enables full 360-degree reflection coverage \cite{c19}. 
 This approach enhances PLS in an energy-efficient manner by optimizing both the orientation and placement of the RIS, thereby increasing network flexibility \cite{c21}.

In addition, the design of RIS-aided transmission in HST communication systems has been investigated in prior studies \cite{c22,c23,c24}. Most of these studies focused on deploying RIS panels at fixed locations, 
which constrains communication performance based on the selected deployment site. Although some research has explored UAV-assisted HST communications \cite{c27,c28}, these studies have not incorporated RIS technology to enhance  communication capabilities further. UAV-RIS-assisted HST communication was investigated in \cite{c30,c31}, but these works did not address the security challenges inherent in HST communications. 

Motivated by the aforementioned challenges, this study investigates ARIS-assisted secure transmission for HST communication systems in harsh and blockage-dominated wireless environments, with a particular focus on short critical railway segments where rapid and on-demand secure coverage enhancement is more practical than dense permanent infrastructure deployment. The weighted sum secrecy rate (WSSR) is employed as the performance metric to evaluate the quality of service. To address this problem, we propose a block coordinate descent (BCD) algorithm to maximize the WSSR while satisfying the constraints on BS transmit power and the unit modulus of the ARIS reflecting coefficients (RC).

The major contributions are summarized as follows:
\begin{itemize}
\item We design a secure transmission method for an ARIS-assisted downlink MU-MISO HST communication system with an eavesdropper (Eve) to enhance HST secure coverage, where the UAV-mounted RIS is quasi-statically deployed over a target railway segment. Different from conventional fixed terrestrial RIS-assisted secure systems, the considered formulation is motivated by localized and on-demand secure coverage enhancement in blockage-prone HST scenarios. Then, we formulate a WSSR optimization problem that is maximized by jointly optimizing the beamforming and the RC. The problem presents significant challenges due to non-concave nature of the objective function and the high coupling among the optimization variables. To solve this problem, we develop a novel linearization technique to transform the objective function into a more tractable form.
\item The BCD method is subsequently employed to iteratively solve the optimization problem. The original optimization problem is decomposed into two subproblems: the beamforming design subproblem and the ARIS phase shift optimization subproblem. To tackle the active beamforming optimization subproblem, the successive convex approximation (SCA) technique is applied. For the phase shift optimization subproblem, it is formulated as a quadratic program (QP) with respect to (w.r.t.) the RIS phase shift. The alternating direction method of multipliers (ADMM) is then utilized to solve this subproblem. 
\item Owing to hardware limitations, the phase shift of the RIS elements are typically quantized, which complicates the acquisition of perfect channel state information (CSI) for the investigated system. Consequently, the analysis is extended to examine the system's performance under conditions involving discrete phase shift and imperfect CSI. The proposed algorithm is directly applicable to scenarios with discrete RC and imperfect CSI.  
\item The performance of the proposed algorithm is evaluated through a set of benchmark schemes, including ideal phase shift, imperfect CSI, discrete phase shift, random beamforming, and random phase shift. These benchmarks are selected to respectively reflect the performance reference bound, practical channel/hardware impairments, and the performance gains brought by the joint optimization of active beamforming and ARIS phase shift. The numerical results verify the necessity and effectiveness of the proposed design for enhancing the secrecy performance of ARIS-assisted HST communication systems.
\end{itemize}


The structure of the paper is outlined as follows. Section \ref{rw} reviews the related literature. Section \ref{sy} presents the system model and formulates the WSSR maximization problem. Section \ref{bfrc} develops the joint design for WSSR maximization. Section \ref{add} extends the investigated system to the cases of discrete phase shift and imperfect CSI. Simulation results are presented in Section \ref{nr} to validate the performance of the proposed algorithm. Finally, Section \ref{con} concludes the paper.

{\textbf{\textit{Notations}}}: In this paper, scalar quantities are represented by italic letters, while boldface lowercase and uppercase letters  denote vectors and matrices, respectively. $\mathbb{C} ^{X\times Y}$ represents the space of $X\times Y$ complex-valued matrices. The conjugate, transpose, and conjugate transpose of $\mathbf{H}$ are denoted as $\mathbf{H}^\dagger $, $\mathbf{H}^{\mathrm{T}}$, $\mathbf{H}^{\mathrm{H}}$, respectively. $\left[ \cdot\right]^+ $ refers to the operation of negative truncation, i.e., $\left[ x\right] ^+ = \max\left(x,0\right)$, $\left[\mathbf{H}\right] _{m,n}$ denotes the element at the $m$th row and $n$th column of matrix $\mathbf{H}$, $\mathbf{H}\succeq \mathbf \mathbf{0}$ suggests that $\mathbf{H}$ is a semi-definite matrix, $\left\lvert \cdot \right\rvert $ is the absolute value of its argument. $\left\lVert \cdot \right\rVert$ is the Euclidean norm of a vector or the Frobenius norm of matrix, $\mathfrak{R} \left\{ \cdot \right\} $ is the real part of its argument. $\mathbb{E} \left\{ \cdot\right\}$ denotes the expectation operator, $\mathrm{diag} \left\{ \cdot\right\}$ represents the diagonal operation. $\otimes$ is the Kronecker product of vectors or matrices. The identity matrix with proper dimension is denoted by $\mathbf{I}$. $\mathcal{C} \mathcal{N} (\mathbf{0},\mathbf{I})$ represents a random vector following a circularly symmetric complex Gaussian (CSCG) distribution, with zero mean   and variance $\mathbf{I}$.

\section{Related Work} \label{rw}
\subsection{RIS-Assisted Secure Communications}
In PLS, the evaluation of the secrecy rate (SR) is vital, as it determines the fundamental boundary of secure communication systems. The SR refers to the gap in achievable rates between the intended receiver and any potential Eve. In recent years, research on RIS has gained attention due to its role in enhancing SR in PLS. Numerous studies have examined the advantages of RIS-aided secure communications, integrating both active and passive beamforming along with various optimization methods to improve the SR \cite{c32,c35,c36}. In \cite{c32}, the authors investigated an RIS-assisted secure MU-MISO communication system with multiple Eves. 
In \cite{c35}, the authors sought to optimize the WSSR and proposed a linearization technique to approximate the non-convex objective function, simplifying its solution. 
In \cite{c36}, the authors explored PLS in RIS-aided cell-free networks, framing the WSSR maximization problem.
However, the aforementioned studies primarily focus on utilizing fixed RIS locations, such as those installed on building exteriors or walls.
 This method is inherently limited, as the cascaded path loss effect leads to significant fluctuations in signal power depending on RIS placement.

\subsection{UAV-Assisted Secure Communications}
In this context, several studies have been conducted, such as \cite{c37,c38,c39,c41}, to evaluate the effectiveness of establishing secure communications between a base station (BS) and a legitimate user using ARIS-based networks in the presence of Eves. In \cite{c37}, the authors focused on optimizing the SR in ARIS networks by optimizing the ARIS selection process. In \cite{c38}, the authors investigated the maximization of the ergodic SR in ARIS networks by optimizing the deployment of the ARIS and RIS components in a three-dimensional (3D) space for end-users. 
In \cite{c39}, the authors introduced a learning-based approach to maximize the SR through the joint optimization of beamforming, UAV placement, AN, and RIS passive beamforming Additionally, the work in \cite{c41} focused on jointly optimizing the beamforming, RIS phase shift, and AN to maximize the SR. However, all of these studies assume static environments for both legitimate users and Eves, neglecting the effects of mobility, particularly in high-mobility scenarios such as HST environments.

\subsection{Secure Communications for HST}
The authors in \cite{c42} emphasized that security is a critical factor in the development of HST systems, and the security performance of  HST wireless communication systems is of paramount importance. 
PLS is regarded as a promising approach to improving the security of wireless communication systems and can be effectively applied to HST communication systems. Several studies in recent years have investigated secure communication for HSTs \cite{c43,c44,c45,c48}. In \cite{c43}, the authors investigated RIS-assisted secure HST communications. Simulation results demonstrated that employing RIS can substantially enhance secrecy performance. In \cite{c44}, the authors leverages the sensing function of integrated sensing and communication technologies to support both eavesdropping detection and PLS techniques. In \cite{c45}, the authors proposed a transmission strategy involving the joint optimization of beamforming and AN. In \cite{c48}, the authors introduced a novel transmission strategy to enhance HST wireless communication security by jointly optimizing beamforming and AN.
However, research on secure wireless communications for HST remains limited. Moreover, none of the aforementioned studies considered integrating UAV and RIS or applying them to PLS techniques to enhance HST communication security. Although WSSR or SR optimization has been studied in conventional RIS-assisted secure communication systems, the existing works mainly focus on fixed RIS deployments or static communication scenarios. In contrast, secure transmission design for ARIS-assisted HST communications remains largely unexplored. The main novelty of this paper is therefore not the mathematical form of the generic optimization problem itself, but its formulation and solution in an ARIS-assisted HST setting, where ARIS deployment, blocked direct links, and high-mobility channel characteristics must be jointly taken into account. Based on this scenario-specific model, we develop a joint beamforming and ARIS phase shift optimization algorithm using the BCD framework, where the beamforming subproblem is handled by SCA and the ARIS phase shift subproblem is addressed by ADMM. We compare our proposed scheme with the relevant works in Table \ref{tab:comparison}.



\begin{table*}[!t]
  \caption{Comparison Among Relevant Works and Our Proposed Scheme}
  \label{tab:comparison}
  \centering
  \renewcommand{\arraystretch}{1.15}
  \setlength{\tabcolsep}{4pt}
  \footnotesize
  \resizebox{\textwidth}{!}{%
  \begin{tabular}{|C{1.3cm}|C{0.6cm}|C{2.8cm}|C{0.6cm}|C{2.2cm}|C{3.1cm}|C{5.0cm}|} 
  \hline
  \rowcolor{gray!25}
  \textbf{Publication} & \textbf{ARIS} & \textbf{ HST Mobility / Mobile User} & \textbf{PLS} & \textbf{Doppler-Aware HST channel} & \textbf{Design Objective} & \textbf{Algorithm} \\
  \hline
  \cite{c32} & $\times$ & $\times$ & $\checkmark$ & $\times$ & WSSR & SCA+ADMM+element-wise BCD+AO \\
  \hline
  \cite{c35} & $\times$ & $\times$ & $\checkmark$ & $\times$ & WSSR &  penalty dual decomposition+Lagrange dual+AO \\
  \hline
  \cite{c36} & $\times$ & $\times$ & $\checkmark$ & $\times$ & WSSR & SDR+SCA+linear conic relaxation+AO \\
  \hline
  \cite{c37} & $\checkmark$ & $\times$ & $\checkmark$ & $\times$ &  average secrecy capacity & improved particle swarm optimization \\
  \hline
  \cite{c38} & $\checkmark$ & $\checkmark$ & $\checkmark$ & $\times$ & secrecy capacity & metaheuristic \\
  \hline
  \cite{c39} & $\checkmark$ & $\times$ & $\times$ & $\times$ & worst-case SR & post-decision state deep Q-network
  combined with Fourier feature mapping\\
  \hline
  \cite{c41} & $\checkmark$ & $\checkmark$ & $\checkmark$  & $\times$ &  average SR & twin delayed deep deterministic policy gradient+deep reinforcement learning \\
  \hline
  \cite{c43} & $\times$ & $\checkmark$ & $\checkmark$ & $\times$ & maximum scheduled flows & maximum ratio transmission+local search+AO \\
  \hline
  \cite{c44} & $\times$ & $\checkmark$ & $\checkmark$ & $\checkmark$ & sum rate & SCA+AO \\
  \hline
  \textbf{This work} & $\checkmark$ & $\checkmark$ & $\checkmark$ & $\checkmark$ & WSSR & SCA+ADMM+BCD \\
  \hline
  \end{tabular}%
  }
\end{table*}

\section{System Model And Problem Formulation} \label{sy}

\subsection{System Model}
As illustrated in F{}ig.~\ref{fig:1}, this study considers a secrecy ARIS-assisted downlink MU-MISO HST communication system. In this scenario, a BS equipped with multiple antennas, assisted by an ARIS, transmits information securely to $K$ single-antenna users in the presence of a malicious single-antenna Eve. An RIS mounted on an UAV facilitates reliable communication. The UAV maintains a fixed position within the coverage area of both the track-side BS and itself, ensuring the communication range includes both entities. 

In this work, the above fixed-position assumption should be understood as a quasi-stationary hovering assumption over a target railway segment during one communication/optimization interval, rather than a long-term invariant deployment over the entire railway line. We assume that the UAV is equipped with basic flight-control and attitude-stabilization mechanisms, so that platform-induced position and attitude fluctuations remain sufficiently small within the considered interval. Therefore, the large-scale geometry between the BS, the ARIS, and the target train segment can be treated as approximately constant for optimization and analysis. 
\begin{remark}
{\textbf{Difference between the quasi-static ARIS and a fixed terrestrial RIS:}}
The quasi-static assumption in this paper means that the UAV-mounted RIS remains approximately stationary only within one short communication/optimization interval over a target railway segment. This assumption is adopted to make the joint active beamforming and ARIS phase shift optimization analytically tractable. From the perspective of one instantaneous optimization block, the ARIS position is indeed fixed and the resulting cascaded channel has a similar algebraic form to that of a conventional fixed RIS-assisted system. However, the considered ARIS should not be interpreted as a permanently installed terrestrial RIS. A terrestrial RIS is usually mounted at a predetermined location, such as on a building wall or along the railway side, and its service region is constrained by the installation site. In contrast, the ARIS can be deployed at a suitable three-dimensional aerial location before or during a target service mission, which is particularly useful for blockage-prone railway segments, emergency communication recovery, and temporary secure reinforcement. Furthermore, the ARIS-user and ARIS-Eve links in the considered HST scenario are affected by high-speed mobility and Doppler shifts. Therefore, although the ARIS is quasi-static within each optimization interval, the proposed model captures a different deployment and propagation scenario from conventional fixed terrestrial RIS-assisted secure communication systems.
\end{remark}

It is also emphasized that the considered communication links are intended to be maintained over a localized service window, where the UAV remains inside the effective coverage region of both the BS and the target train segment, and where no additional blockage occurs on the BS-ARIS link. Hence, the proposed model is particularly suitable for short-duration and segment-level secure coverage enhancement, such as local blockage mitigation, emergency communication recovery, or temporary secure reinforcement over critical railway sections, rather than continuous full-route aerial escorting.

\begin{remark}
{\textbf{Operational duty cycle and service scheduling:}}
The quasi-static assumption in this paper means that the UAV-mounted RIS remains approximately stationary only within one short The proposed ARIS-assisted UAV is intended for localized and on-demand secure coverage enhancement over a target railway segment, rather than continuous full-route service. Therefore, the WSSR maximization problem formulated in this paper should be interpreted as the physical-layer secure transmission design during an active service window, in which the target HST is located inside the effective ARIS-supported segment and the UAV-mounted RIS remains quasi-stationary.
\end{remark}

In practical operation, the active service duration of the UAV for one train depends on the length of the ARIS-supported segment and
the train speed. Let $L_{\rm seg}$ denote the effective supported segment length, $v$ denote the train speed, and $T_{\rm g}$ denote
the preparation and guard time required for UAV activation, link alignment, and service switching. Then, the active service time for
one train can be approximately expressed as 
\begin{equation}
T_{\rm on} \approx \frac{L_{\rm seg}}{v}+T_{\rm g}.
\end{equation}
If the average train headway is $T_{\rm h}$, the corresponding operational duty cycle can be approximated as
\begin{equation}
\rho \approx \min\left\{1,\frac{T_{\rm on}}{T_{\rm h}}\right\}.
\end{equation}
For non-periodic train arrivals, $\rho$ can be evaluated over a given operation horizon according to the union of active service intervals associated with all scheduled train arrivals.

This duty-cycle interpretation indicates that, when $T_{\rm on} \ll T_{\rm h}$, the UAV may remain idle for a large portion of the operation period if it continuously hovers at the service location. Hence, continuous hovering is not necessarily energy-efficient for intermittent HST arrivals. A more practical operation mode is timetable-aware or event-triggered service scheduling, where the UAV is activated before the train enters the target segment, remains quasi-stationary during the active service window, and then switches to standby, landing, recharging, or relocation after the train leaves the segment. When multiple trains arrive with short headways, adjacent service windows can be grouped to reduce repeated takeoff and landing overhead. Therefore, the proposed ARIS-assisted architecture is most suitable for short-term secure coverage enhancement in blockage-prone, emergency, or security-sensitive railway segments. Long-term duty-cycle-aware UAV scheduling, battery management, charging planning, and multi-segment service assignment are beyond the scope of this paper and will be investigated in future work.

The BS is equipped with a uniform planar array (UPA) of $ M =  M_{y_{\mathrm{B}}} \times  M_{z_{\mathrm{B}}}$ antennas, the ARIS consists of a UPA with $N =  N_{x_{\mathrm{R}}} \times  N_{y_{\mathrm{R}}}$ elements. For brevity, we define the sets $\mathcal{K} = \left\{ 1,\cdots ,K \right\} $ and $\mathcal{N} = \left\{ 1,\cdots ,N \right\}$ as the indices of the users and ARIS reflecting elements, respectively. Obstacles such as buildings or vegetation block the direct links between the BS and the users, as well as between the BS and Eve. Consequently, both the users and Eve can only receive signals reflected by the ARIS. We denote $\mathbf{h}_{\mathrm{BR}}\in \mathbb{C} ^{N\times M}$ as the channel between the BS and the ARIS, $\mathbf{h}_{\mathrm{R},k}\in \mathbb{C} ^{N}$ as the channel between the ARIS and the $k$th user, and $\mathbf{h}_{\mathrm{RE}}\in \mathbb{C} ^{N}$ as the channel between the ARIS and the Eve. Since the practical methods for CSI acquisition in RIS-assisted communication systems have been extensively studied in prior work \cite{r3,r4}, we assume perfect CSI knowledge at both the BS and ARIS in the main formulation for analytical tractability. 

\begin{figure}[!t]
  \centering
  {\includegraphics[scale=0.22]{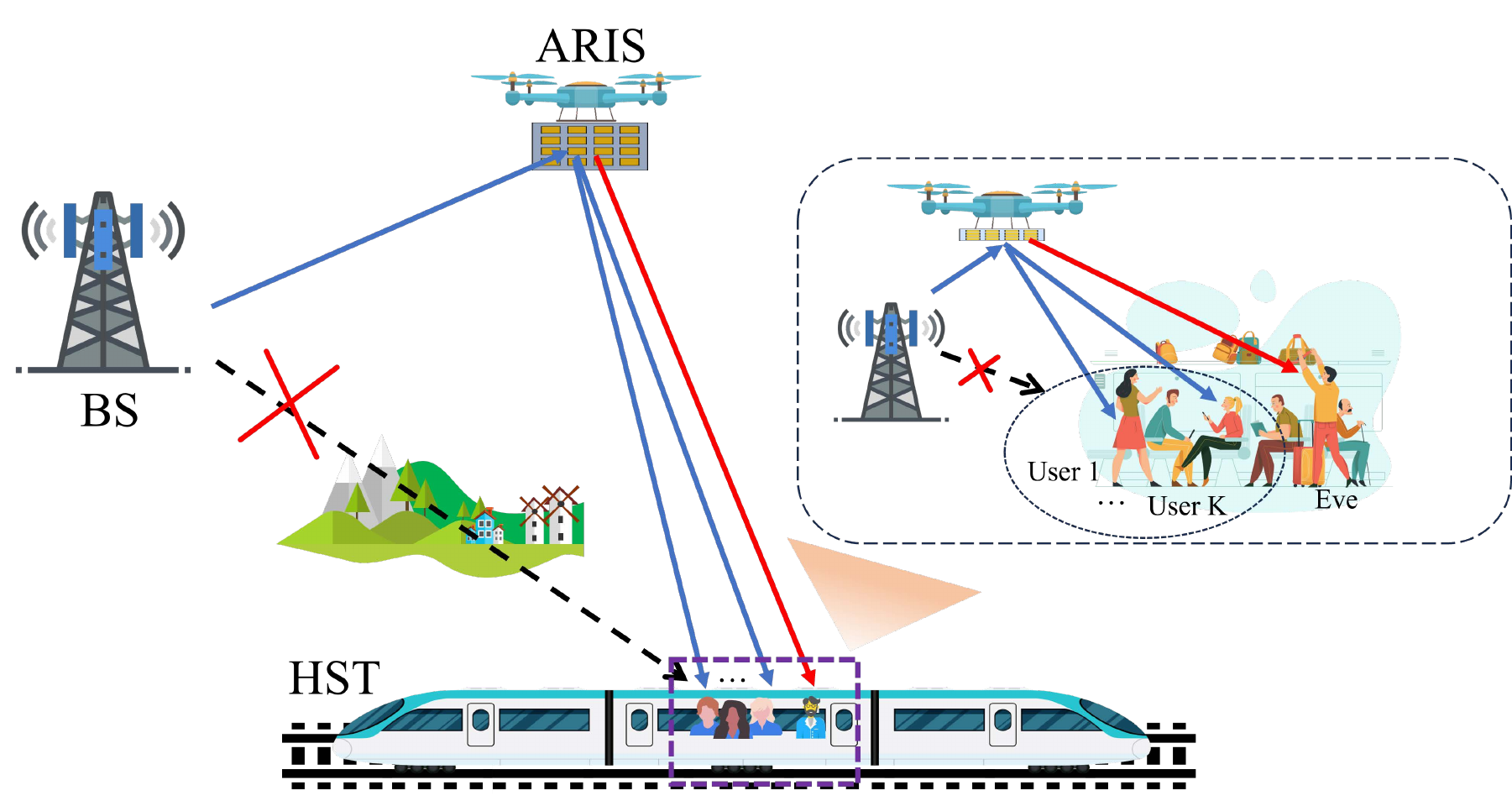}}
  \caption{ \label{fig:1} System model of an ARIS-assisted HST communications.}
\end{figure}

In the 3D Cartesian coordinate system, the positions of the BS, the ARIS, user $k$, and the Eve are denoted as $\boldsymbol{q}_{\mathrm{B}} = \left[\boldsymbol{q}_{\mathrm{B}}^{(xy)}; H_{\mathrm{B}}\right] \in\mathbb{R}^{3\times1}$,  $\boldsymbol{q}_{\mathrm{R}} = \left[\boldsymbol{q}_{\mathrm{R}}^{(xy)}; H_{\mathrm{R}}\right] \in\mathbb{R}^{3\times1}$, $\boldsymbol{q}_{\mathrm{U}} = \left[\boldsymbol{q}_{\mathrm{U}}^{(xy)}; H_{\mathrm{U}}\right] \in\mathbb{R}^{3\times1}$, and $\boldsymbol{q}_{\mathrm{E}} = \left[\boldsymbol{q}_{\mathrm{E}}^{(xy)}; H_{\mathrm{E}}\right] \in\mathbb{R}^{3\times1}$, respectively. Here, $\boldsymbol{q}_{\mathrm{B}}^{(xy)} \in\mathbb{R}^{2\times1}, \boldsymbol{q}_{\mathrm{R}}^{(xy)}\in\mathbb{R}^{2\times1}, \boldsymbol{q}_{\mathrm{U}}^{(xy)}\in\mathbb{R}^{2\times1}$, and $\boldsymbol{q}_{\mathrm{E}}^{(xy)}\in\mathbb{R}^{2\times1}$ are their horizontal positions, while $H_{\mathrm{B}}, H_{\mathrm{R}}, H_{\mathrm{U}}$, and $H_{\mathrm{E}}$ denote their respective heights.

The reflection phase matrix is denoted by $\mathbf{\Theta }=\mathrm{diag}\left\{ \theta _1,\cdots ,\theta _N \right\} \in \mathbb{C} ^{N\times N} $, where $\theta _n=e^{j\varphi _n}$ and $\varphi _n\in \left[ 0,2\pi \right), \forall n \in \mathcal{N}$, $\varphi _n \in\mathcal{F}$ represents the phase shift of the $n$th ARIS element. 
Following \cite{c32}, we assume the amplitude of the ARIS to be normalized due to its passive characteristic, leading to $\mathcal{F}=\left\{\theta_{n}|\theta_{n}=e^{j\varphi_{n}},\varphi_{n}\in[0,2\pi)\right\}$. 

The signal transmitted from the BS to the $k$th user is denoted as $s_k$, where $\mathbb{E}\left\{s_k\right\}=0$ and $\mathbb{E}\left\{|s_k|^2\right\}=1$. Thus, the transmitted signals from the BS can be expressed as 
\begin{equation}
  \mathbf{x}=\sum_{k=1}^K\mathbf{w}_{k}s_k,
\end{equation}
where $\mathbf{w}_k\in \mathbb{C} ^{M}$ is the transmit beamforming vector for the $k$th user. Thus, the received signal at the $k$th user is given by 
\begin{align}
  y_k  &= \left(\mathbf{h}_{\mathrm{R},k}^{\mathrm{H}}\mathbf{\Theta }^{\mathrm{H}}\mathbf{h}_{\mathrm{BR}} \right)\mathbf{x} + n_k \\ 
  & = {\mathbf{h}_{\mathrm{R},k}^{\mathrm{H}}\mathbf{\Theta }^{\mathrm{H}}\mathbf{h}_{\mathrm{BR}}\mathbf{w}_ks_k}  + \sum_{j=1,j\neq k}^K{\mathbf{h}_{\mathrm{R},k}^{\mathrm{H}}\mathbf{\Theta }^{\mathrm{H}}\mathbf{h}_{\mathrm{BR}}\mathbf{w}_js_j}+n_k, \nonumber 
\end{align}
where $\mathbf{h}_{k}^{\mathrm{H}} =\mathbf{h}_{\mathrm{R},k}^{\mathrm{H}}\mathbf{\Theta }^{\mathrm{H}}\mathbf{h}_{\mathrm{BR}}$ is the equivalent channel of the composite BS-ARIS-user link, $n_k$ represents the  additive white Gaussian noise (AWGN) at the $k$th user, with zero mean and variance $\sigma_k^2$. Furthermore, the information of the $k$th user intercepted by the Eve can be given by
  \begin{align}  
  y_{e,k} &=  \left( \mathbf{h}_{\mathrm{RE}}^{\mathrm{H}}\mathbf {\Theta }^{\mathrm{H}}\mathbf{h}_{\mathrm{BR}}   \right) \mathbf{x} + n_{e}   \\
  & = {\mathbf{h}_{\mathrm{RE}}^{\mathrm{H}}\mathbf{\Theta }^{\mathrm{H}}\mathbf{h}_{\mathrm{BR}}\mathbf{w}_ks_k}  + \sum_{j=1,j\neq k}^K{\mathbf{h}_{\mathrm{RE}}^{\mathrm{H}}\mathbf{\Theta }^{\mathrm{H}}\mathbf{h}_{\mathrm{BR}}\mathbf{w}_js_j}+n_{e}, \nonumber
\end{align}
where $\mathbf{h}_{\mathrm{E}}^{\mathrm{H}} = \mathbf{h}_{\mathrm{RE}}^{\mathrm{H}}\mathbf {\Theta }^{\mathrm{H}}\mathbf{h}_{\mathrm{BR}} $ is the equivalent channel of the composite BS-ARIS-Eve link, and $n_{e}$ is the AWGN at Eve with zero mean and variance $\sigma _{e}^{2}$.

By defining the vector $\boldsymbol{\theta }=\left[ \theta _1,\dots ,\theta _N \right] ^{\mathrm{T}}$, the equivalent channels $\mathbf{h}_{k}^{\mathrm{H}} =\mathbf{h}_{\mathrm{R},k}^{\mathrm{H}}\mathbf{\Theta }^{\mathrm{H}}\mathbf{h}_{\mathrm{BR}}$ and $\mathbf{h}_{\mathrm{E}}^{\mathrm{H}} = \mathbf{h}_{\mathrm{RE}}^{\mathrm{H}}\mathbf {\Theta }^{\mathrm{H}}\mathbf{h}_{\mathrm{BR}}$ can be expressed as
\begin{align}
  \mathbf{h}_{k}^{\mathrm{H}}&\triangleq \boldsymbol{\theta }^{\mathrm{H}}\mathbf{H}_k, \forall k \in \mathcal{K}, \\
  \mathbf{h}_{\mathrm{E}}^{\mathrm{H}}&\triangleq \boldsymbol{\theta }^{\mathrm{H}}\mathbf{H}_{\mathrm{E}},
\end{align}
where $\mathbf{H}_k=\mathrm{diag}\left( \mathbf{h}_{\mathrm{R},k}^{\mathrm{H}} \right) \mathbf{h}_{\mathrm{BR}} \in \mathbb{C} ^{N\times M}$ is defined as the cascaded BS-ARIS-user channel, and $\mathbf{H}_{\mathrm{E}}=\mathrm{diag}\left( \mathbf{h}_{\mathrm{RE}}^{\mathrm{H}} \right) \mathbf{h}_{\mathrm{BR}} \in \mathbb{C} ^{N\times M}$ is defined as the cascaded BS-ARIS-Eve channel.

\subsection{Channel Model}
In high-speed mobility scenarios, the Rician fading channel has been widely employed in HST communications, as discussed \cite{r6,r7}. Consequently, we model all the involved channels using the Rician fading model. 

\subsubsection{BS-ARIS Channel}
The channel between the BS and ARIS ${\mathbf{h}}_{\mathrm{BR}} \in \mathbb{C} ^{N\times M}$  can be given by
\begin{equation} 
  \mathbf{h}_{\mathrm{BR}}=\sqrt{L_{\mathrm{BR}}}\left( \sqrt{\frac{\kappa _{\mathrm{BR}}}{\kappa _{\mathrm{BR}}+1}}\overline{\mathbf{h}}_{\mathrm{BR}}+\sqrt{\frac{1}{\kappa _{\mathrm{BR}}+1}}\widetilde{\mathbf{h}}_{\mathrm{BR}} \right),
\end{equation}
where $\kappa _{\mathrm{BR}} \geq 0 $ is the Rician K-factor, $L_{\mathrm{BR}}$ represents the distance-dependent path loss, which can be given as $L_{\mathrm{BR}} = L_0\left( \frac{d_{\mathrm{BR}}}{d_0}\right)^{-\mathrm{\alpha}_\mathrm{BR}}$, where $L_0$ represents the path loss at the reference distance $d_0$ meter (m),  $d_{\mathrm{BR}} = \left\lVert  \boldsymbol{q}_{\mathrm{B}} - \boldsymbol{q}_{\mathrm{R}}  \right\rVert $ is the distance between the BS and ARIS, and $\mathrm{\alpha}_\mathrm{BR}$ denotes the path loss exponent. Additionally, $\overline{\mathbf{h}}_{\mathrm{BR}} \in \mathbb{C} ^{N\times M}$ is the LoS component, given by 
\begin{equation}
  \overline{\mathbf{h}}_{\mathrm{BR}}=\mathbf{a}_{\mathrm{R}}\left( \phi ^{\mathrm{BR}},\delta ^{\mathrm{BR}} \right) \mathbf{a}_{\mathrm{B}}^{\mathrm{H}}\left( \phi ^{\mathrm{BR}},\delta ^{\mathrm{BR}} \right),
\end{equation}
where $\phi ^{\mathrm{BR}}$ and $\delta ^{\mathrm{BR}}$ are the elevation and azimuth arrival angles, respectively, $\mathbf{a}_{\mathrm{B}}\left( \cdot ,\cdot  \right) \in \mathbb{C} ^{M }$ and $\mathbf{a}_{\mathrm{R}}\left( \cdot ,\cdot  \right) \in \mathbb{C} ^{N}$ denote the array response vectors associated with the BS and the ARIS, respectively, which can be given as follows:
\begin{align}
  &\mathbf{a}_{\mathrm{B}}\left( \phi^{\mathrm{BR}} ,\delta^{\mathrm{BR}} \right) = \left[ 1, \cdots ,  e^{\frac{j2\pi^{\mathrm{BR}} d_{y_{\mathrm{B}}}\left( M_{y_{\mathrm{B}}}-1 \right)}{\lambda}\sin \phi^{\mathrm{BR}} \cos \delta^{\mathrm{BR}} } \right] ^{\mathrm{T}}  \nonumber  \\
  &   \qquad \qquad \, \otimes  \left[ 1,\cdots,  e^{\frac{j2\pi d_{z_{\mathrm{B}}}\left( M_{z_{\mathrm{B}}}-1 \right)}{\lambda}\sin \phi^{\mathrm{BR}} \cos \delta^{\mathrm{BR}} } \right] ^{\mathrm{T}} ,
\end{align}

\begin{align}
  & \mathbf{a}_{\mathrm{R}}\left( \phi^{\mathrm{BR}} ,\delta^{\mathrm{BR}} \right) = \left[ 1, \cdots ,  e^{\frac{j2\pi^{\mathrm{BR}} d_{x_{\mathrm{R}}}\left( N_{y_{\mathrm{R}}}-1 \right)}{\lambda}\sin \phi^{\mathrm{BR}} \cos \delta^{\mathrm{BR}} } \right] ^{\mathrm{T}} \nonumber  \\
  &    \qquad \qquad \,  \otimes  \left[ 1,\cdots,  e^{\frac{j2\pi d_{Y_{\mathrm{R}}}\left( N_{y_{\mathrm{R}}}-1 \right)}{\lambda}\sin \phi^{\mathrm{BR}} \cos \delta^{\mathrm{BR}} } \right] ^{\mathrm{T}} ,
\end{align}
where $d_l = \lambda /2, l \in {x, y, z}$ represents the spacing between the elements of the uniform antenna array. Here, $\lambda = \frac{c}{f_c}$ is the wavelength, where $f_c$ and $c$ denote the carrier frequency and the speed of light, respectively. $ N_{y_{\mathrm{B}}} \left( M_{y_{\mathrm{R}}} \right)$ and $ N_{z_{\mathrm{B}}} \left( M_{y_{\mathrm{R}}} \right)$ denote the number of rows and columns, respectively, of the UPA in the 3D plane.

Additionally, $\widetilde{\mathbf{h}}_{\mathrm{BR}}\in \mathbb{C} ^{N \times M}$ is the non-line-of-sight (NLoS) component, with each element modeled as an independent CSCG random vector.

\subsubsection{ARIS-user Channel}
The channel between the ARIS and the $k$th user $\mathbf{h}_{\mathrm{R},k}\in \mathbb{C} ^{N}$ can be given by
\begin{align}
  \mathbf{h}_{\mathrm{R},k}=\sqrt{L_{\mathrm{R},k}}&\left( \sqrt{\frac{\kappa _{\mathrm{R},k}}{\kappa _{\mathrm{R},k}+1}}\overline{\mathbf{h}}_{\mathrm{R},k}+  \sqrt{\frac{1}{\kappa _{\mathrm{R},k}+1}}\widetilde{\mathbf{h}}_{\mathrm{R},k} \right),
\end{align}
where $\kappa _{\mathrm{R},k} \geq 0 $ denotes the Rician K-factor, $L_{\mathrm{R},k}$ represents the path loss. Additionally,  $\overline{\mathbf{h}}_{{\mathrm{R},k}} \in \mathbb{C} ^{N }$ is the LoS component, given by 
\begin{equation}
  \overline{\mathbf{h}}_{\mathrm{R},k}=e^{j2\pi f_d\tau}\mathbf{a}_{\mathrm{R}}\left( \phi ^{\mathrm{R},k},\delta ^{\mathrm{R},k} \right),
\end{equation}
where $f_{d} = v \cos \phi^{\mathrm{R},k} \cos \delta^{\mathrm{R},k} / \lambda$ denotes the Doppler frequency shift, and $\tau$ represents the time slot duration. The term $\widetilde{\mathbf{h}}_{\mathrm{R},k}$ is the NLoS component.

\subsubsection{ARIS-Eve Channel}
The channel between the ARIS and Eve $\mathbf{h}_{\mathrm{RE}}\in \mathbb{C} ^{N}$ can be given by
\begin{equation}
\mathbf{h}_{\mathrm{RE}}=\sqrt{L_{\mathrm{RE}}}\left( \sqrt{\frac{\kappa _{\mathrm{RE}}}{\kappa _{\mathrm{RE}}+1}}\overline{\mathbf{h}}_{\mathrm{RE}}+\sqrt{\frac{1}{\kappa _{\mathrm{RE}}+1}}\widetilde{\mathbf{h}}_{\mathrm{RE}} \right),
\end{equation}
where $\kappa _{\mathrm{RE}} \geq 0 $ is the Rician K-factor, and $L_{\mathrm{RE}}$ represents the distance-dependent path loss. In addition, $\overline{\mathbf{h}}_{{\mathrm{RE}}} \in \mathbb{C} ^{N }$ is the LoS component, given by 
\begin{equation}
  \overline{\mathbf{h}}_{\mathrm{RE}}=e^{j2\pi f_d^{\mathrm{RE}}\tau}\mathbf{a}_{\mathrm{R}}\left( \phi ^{\mathrm{RE}},\delta ^{\mathrm{RE}} \right),
\end{equation}
where $f_{d}^{\mathrm{RE}}=\upsilon \cos \phi ^{\mathrm{RE}} \cos \delta ^{\mathrm{RE}} /\lambda$ represents the Doppler frequency. $\widetilde{\mathbf{h}}_{\mathrm{RE}}$ is the NLoS component.

\begin{remark}
{\textbf{Role of Doppler terms:}}
The Doppler terms in the ARIS-user and ARIS-Eve channels are introduced to characterize the mobility-induced phase evolution
caused by the high-speed movement of the train. It should be noted that, under the ideal assumption of perfect instantaneous CSI and
perfect Doppler tracking, the Doppler-induced phase rotation is included in the instantaneous channel realization and can be compensated in the joint active beamforming and ARIS phase shift design. Therefore, in the perfect CSI case considered in the
main formulation, Doppler does not act as an additional residual impairment in the SINR expressions.
\end{remark}

In this sense, the perfect CSI formulation should be interpreted as a Doppler-aware upper bound benchmark. The role of the Doppler
terms is to reflect the high-mobility HST channel evolution rather than to claim an uncompensated frequency-offset effect. In practical HST communications, imperfect Doppler tracking, channel aging, and outdated CSI may lead to a mismatch between the channel used for optimization and the actual channel during data transmission, which can further affect the achievable secrecy performance. The robust design under such Doppler-induced CSI mismatch is beyond the scope of this paper and will be investigated in future work.


Let $\gamma _k$ and $\gamma _{e,k}$ represent the signal-to-interference-plus-noise ratio (SINR) for the $k$th user and the Eve intercepting the $k$th user's information, respectively, which are given by 
\begin{equation}
  \gamma _k=\frac{\left| \boldsymbol{\theta }^{\mathrm{H}}\bar{\mathbf{H}}_k\mathbf{w}_k \right|^2}{\sum_{j=1,j\ne k}^K{\left| \boldsymbol{\theta }^{\mathrm{H}}\bar{\mathbf{H}}_k\mathbf{w}_j \right|^2}+1},
\end{equation}
\begin{equation}
  \gamma _{e,k}=\frac{\left| \boldsymbol{\theta }^{\mathrm{H}}\bar{\mathbf{H}}_{\mathrm{E}}\mathbf{w}_k \right|^2}{\sum_{j=1,j\ne k}^K{\left| \boldsymbol{\theta }^{\mathrm{H}}\bar{\mathbf{H}}_{\mathrm{E}}\mathbf{w}_j \right|}^2+1},
\end{equation}
where $\bar{\mathbf{H}}_k = {\mathbf{H}}_k \sigma_{k}^{-1}$ and $\bar{\mathbf{H}}_{\mathrm{E}} = {\mathbf{H}}_{\mathrm{E}} \sigma_{e}^{-1}$.

\subsection{Problem Formulation}
In terms of information theory, the core objective of PLS transmission is to optimize the system's SR. The SR is calculated by subtracting the data rate of eavesdropping users from that of legitimate users \cite{c10}. Based on the above channel models, the achievable SR for the $k$th user is
\begin{equation}
  R_{s,k}=\left[ \log _2\left( 1+\gamma _k \right) -\log _2\left( 1+\gamma _{e,k} \right) \right] ^+.
\end{equation}

In this paper, the objective of our design is to maximize WSSR by jointly optimizing the active beamforming vector $\mathbf{w}_k$ and the phase shift $\boldsymbol{\theta }$, while adhering to the constraints on the BS transmit power and the unit modulus at each ARIS element. The optimization problem can be given by
\begin{subequations} \label{eq:P0}
\begin{align} 
  \mathcal{P}_1: \underset{\mathbf{W},\boldsymbol{\theta }}{\max}\, & R_s=\sum_{k=1}^K{\varrho_kR_{s,k}}  \ \label{yyaa1}    \\  
  \mathrm{s}.\mathrm{t}. \,\, & \sum_{k=1}^K{\left\| \mathbf{w}_k \right\| ^2}\le P_T,   \label{yyaa2} \\
  &\left| \theta _n \right|=1,\forall n\in \mathcal{N},  \label{yyaa3}
\end{align}
\end{subequations}
where $\varrho_k \left( 0 \le \varrho_k \le 1 \right)$ is the weight of user $k$, $\mathbf{W}$ is defined as $\mathbf{W}=\left[ \mathbf{w}_{1}^{\mathrm{T}},\cdots ,\mathbf{w}_{K}^{\mathrm{T}} \right]^{\mathrm{T}}$, and $P_T$ is the BS transmit power.

\section{Joint Active Beamforming and RC Design}  \label{bfrc}
The WSSR maximization problem $\mathcal{P}_1$ is challenging to solve because the objective function in \eqref{yyaa1} is non-concave, the variables $\mathbf{W}$ and $\boldsymbol{\theta}$ are coupled, and the unit modulus constraint in \eqref{yyaa3} must be satisfied. Therefore, in this section, we present the BCD algorithm to address the problem $\mathcal{P}_1$. First, we decompose the original problem $\mathcal{P}_1$ into two subproblems: active beamforming optimization and phase shift design. These subproblems are then alternately tackled using the SCA and ADMM methods.

\subsection{Active Beamforming Optimization Subproblem}
Initially, we introduce an SCA approach to optimize the active beamforming while keeping the phase shift variable $\boldsymbol{\theta }$ fixed. The primary idea is to approximate the problem $\mathcal{P}_1$ around the fixed point $\left\{ \mathbf{w}_{k}^{q},\boldsymbol{\theta }^q \right\}$ at the $q$th iteration. The following two lemmas can be used to transform $R_{s,k}$ into a more tractable and solvable expression.
\begin{lemma} \label{le1}
  \cite{c32} For any $a$ and $b$, we have
  \begin{align} 
      \log_2\left( 1+\frac{\left| a \right|^2}{b} \right) & \ge \log_2\left( 1+\frac{\left| \tilde{a} \right|^2}{\tilde{b}} \right) -\frac{\left| \tilde{a} \right|^2}{\tilde{b}\ln 2} \nonumber  \\
      &\quad +\frac{2\Re \{a\tilde{a}\}}{\tilde{b}\ln 2} -\frac{\left| \tilde{a} \right|^2\left( b+\left| a \right|^2 \right)}{\tilde{b}\left( \tilde{b}+\left| \tilde{a} \right|^2 \right) \ln 2}, 
  \end{align}
  \label{eq:le1}
  where $\tilde{a}$ and $\tilde{b}$ are fixed points.
\end{lemma}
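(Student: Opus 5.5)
The plan is to chain two elementary convexity bounds, each tight at the fixed point $(\tilde a,\tilde b)$. Throughout I assume $b,\tilde b>0$, as in the SINR application, where $b$ is interference plus normalized noise. I read $\Re\{a\tilde a\}$ as $\Re\{\tilde a^{\dagger} a\}$, the conjugation convention that makes the bound tight at $a=\tilde a$. It suffices to prove the bound for the natural logarithm and then divide by $\ln 2$.

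Set $x=|a|^2/b$, $c=b+|a|^2$, and define $\tilde x$, $\tilde c$ analogously, so that $1+x=c/b$. The first step uses the convexity of $-\ln z$ at $z=1/(1+x)$. The tangent inequality $-\ln z\ge -\ln\tilde z+1-z/\tilde z$ gives
\begin{equation*}
\ln(1+x)\ \ge\ \ln(1+\tilde x)+1-(1+\tilde x)\,\frac{b}{b+|a|^2}.
\end{equation*}
This reduces the problem to finding an upper bound on $b/c$. Since $b/c=1-|a|^2/c$, that is the same as a lower bound on $|a|^2/c$.

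The second step uses that the quadratic-over-linear function $(a,c)\mapsto |a|^2/c$ is jointly convex for $c>0$. Its first-order (tangent) bound at $(\tilde a,\tilde c)$ reads
\begin{equation*}
\frac{|a|^2}{c}\ \ge\ \frac{2\Re\{\tilde a^{\dagger}a\}}{\tilde c}-\frac{|\tilde a|^2 c}{\tilde c^{2}}.
\end{equation*}
Substituting this into the first step and using $1+\tilde x=\tilde c/\tilde b$ gives
\begin{equation*}
\ln(1+x)\ \ge\ \ln(1+\tilde x)+1-\frac{\tilde c}{\tilde b}+\frac{2\Re\{\tilde a^{\dagger}a\}}{\tilde b}-\frac{|\tilde a|^2 (b+|a|^2)}{\tilde b\,\tilde c}.
\end{equation*}

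Finally, $1-\tilde c/\tilde b=-|\tilde a|^2/\tilde b$ and $\tilde c=\tilde b+|\tilde a|^2$. Making these substitutions and dividing by $\ln 2$ yields exactly the right-hand side of Lemma~\ref{le1}.

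The only delicate point is the choice of intermediate variable. Linearizing $\ln(1+x)$ directly in $x$ gives an upper bound, because the function is concave. Passing through $z=1/(1+x)$ turns it into a convex function, and the leftover term $|a|^2/c$ is again convex, so both linearizations are valid lower bounds. Equality holds at $a=\tilde a$, $b=\tilde b$, which is what the SCA scheme requires.
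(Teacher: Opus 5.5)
Your proof is correct. Each step checks out:
\begin{itemize}
\item the tangent bound for $-\ln$ at $\tilde z=\tilde b/\tilde c$;
\item the tangent bound for the jointly convex map $(a,c)\mapsto |a|^2/c$ at $(\tilde a,\tilde c)$, which is equivalent to $|a/\sqrt{c}-\sqrt{c}\,\tilde a/\tilde c|^2\ge 0$;
\item the positivity of $1+\tilde x$, which lets you substitute the second bound into the first;
\item the closing algebra.
\end{itemize}
The paper gives no proof of Lemma~\ref{le1}; it imports the bound from \cite{c32}. So your argument supplies a self-contained derivation rather than offering an alternative to an in-paper one. Your two tangents are the variational identities $-\ln z=\max_{u>0}\{\ln u-uz+1\}$ and $|a|^2/c=\max_{w}\{2\Re\{wa\}-|w|^2c\}$, with $u=1+\tilde x$ and $w=\tilde a^{\dagger}/\tilde c$ frozen at the fixed point. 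This is the WMMSE-type mechanism behind this bound.

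Compared with the bare citation, your version buys two things. First, it makes the hypotheses explicit.
\begin{itemize}
\item The condition $b,\tilde b>0$ is genuinely necessary, not just convenient. For $b=-1$, $a=0$, $\tilde a=\tilde b=1$, the left side is $0$ while the right side is $1-1/(2\ln 2)>0$, so the stated ``for any $a$ and $b$'' is too broad.
\item The cross term must be read as $\Re\{\tilde a^{\dagger}a\}$ for the bound to be tight at $a=\tilde a$. This is also how the paper actually applies it to $\log_2(1+\gamma_k)$.
\end{itemize}

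Second, $\|\mathbf{a}\|^2/c$ is jointly convex for a vector $\mathbf{a}$. Hence your argument runs verbatim with $|a|^2$ replaced by $\sum_i|a_i|^2$ and $b=\tilde b=1$, and this gives \eqref{le2} in one step. That is preferable to the paper's coordinate-wise proof of Lemma 2. Its intermediate bound \eqref{eq:e2} drops the factors $1/S$ that Lemma~\ref{le1} with $b=\tilde b=S$ would produce. As written, it can fail when $S\neq 1$: with $S=2$, $\tilde a_1=1$, $a_1=1.1$, the right side exceeds the left.
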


Based on Lemma \ref{le1}, the data rate of the $k$th user at the given point $\left\{ \mathbf{w}_{k}^{q},\boldsymbol{\theta }^q \right\} $ can be approximated as
\begin{align} 
  &\log _2\left( 1+\gamma _k \right)  = \log_2\left( 1+\frac{\left| \boldsymbol{\theta }^{\mathrm{H}}\bar{\mathbf{H}}_k\mathbf{w}_k \right|^2}{\sum\nolimits_{j=1,j\ne k}^K{\left| \boldsymbol{\theta }^{\mathrm{H}}\bar{\mathbf{H}}_k\mathbf{w}_j \right|^2}+1}\right) \nonumber \\ 
   & \ge \log_2\left( 1+c_1\left| \left( \boldsymbol{\theta }^q \right) ^{\mathrm{H}}\bar{\mathbf{H}}_k\mathbf{w}_{k}^{q} \right|^2 \right) -\frac{c_1\left| \left( \boldsymbol{\theta }^q \right) ^{\mathrm{H}}\bar{\mathbf{H}}_k\mathbf{w}_{k}^{q} \right|^2}{\ln2}   \nonumber  \\
   &+\frac{c_12\Re \{\left( \mathbf{w}_k \right) ^{\mathrm{H}}\bar{\mathbf{H}}_{k}^{\mathrm{H}}\boldsymbol{\theta }^q\left( \boldsymbol{\theta }^q \right) ^{\mathrm{H}}\bar{\mathbf{H}}_k\mathbf{w}_{k}^{q}\}}{\ln 2} \nonumber \\   
  & -\frac{c_2}{\ln 2}\left( \sum\nolimits_{j=1,j\ne k}^K  {\left| \left( \boldsymbol{\theta }^q \right) ^{\mathrm{H}}\bar{\mathbf{H}}_k\mathbf{w}_j \right|^2} + 1  + \left| \left( \boldsymbol{\theta }^q \right) ^{\mathrm{H}}\bar{\mathbf{H}}_k\mathbf{w}_k \right|^2 \right), 
\end{align}
where
\begin{subequations}
  \begin{align}  
    c_1&=1/\left( \sum\nolimits_{j=1,j\ne k}^K{\left| \left( \boldsymbol{\theta }^q \right) ^{\mathrm{H}}\bar{\mathbf{H}}_k\mathbf{w}_{j}^{q} \right|^2}+1 \right), \tag{\ref{eq:app1}a} \\
    c_2&=  c_1\left| \left( \boldsymbol{\theta }^q \right) ^{\mathrm{H}}\bar{\mathbf{H}}_k\mathbf{w}_{k}^{q} \right|^2/\left( \sum\nolimits_{j=1,j\ne k}^K  \right. {\left| \left( \boldsymbol{\theta }^q \right) ^{\mathrm{H}}\bar{\mathbf{H}}_k\mathbf{w}_{j}^{q} \right|^2} +1  \nonumber \\ 
    &\qquad\qquad\qquad\qquad\quad   \left.  +\left| \left( \boldsymbol{\theta }^q \right) ^{\mathrm{H}}\bar{\mathbf{H}}_k\mathbf{w}_{k}^{q} \right|^2 \right). \tag{\ref{eq:app1}b}
  \end{align}
  \label{eq:app1}
\end{subequations}

Next, we reformulate the information rate corresponding to Eve intercepting the $k$th user. This term is given by
\begin{align} \label{eq:e1}
  -\log _2\left( 1+\gamma _{e,k} \right) &=\log_2\left( 1+\sum_{j=1,j\ne k}^K{\left| \left( \boldsymbol{\theta }^q \right) ^{\mathrm{H}}\bar{\mathbf{H}}_{\mathrm{E}}\mathbf{w}_j \right|^2} \right) \nonumber \\ 
  & \quad -\log_2\left( 1+z_{e,k} \right), 
\end{align}
where $z_{e,k}=\sum\nolimits_{j=1}^K{\left| \left( \boldsymbol{\theta }^q \right) ^{\mathrm{H}}\bar{\mathbf{H}}_{\mathrm{E}}\mathbf{w}_j \right|^2}$. To address  \eqref{eq:e1}, we present the following lemma.

\begin{lemma} 
  \cite{c35} For any $\left\{ a_i \right\} _{i=1}^{K}$, we have
  \begin{align}   \label{le2}
    &\log _2\left( 1+\sum_{i=1}^K{\left| a_i \right|^2} \right)\ge \log _2\left( 1+\sum_{i=1}^K{\left| \tilde{a}_i \right|^2} \right) -\frac{\sum\nolimits_{i=1}^K{\left| \tilde{a}_i \right|^2}}{\ln 2}  \nonumber \\
    &+ \frac{\sum\nolimits_{i=1}^K{2\Re \left\{ \tilde{a}_{i}^{*}a_i \right\}}}{\ln 2}-\frac{\left( \sum\nolimits_{i=1}^K{\left| \tilde{a}_i \right|^2} \right) \left( 1+\sum\nolimits_{i=1}^K{\left| a_i \right|^2} \right)}{\left( 1+\sum\nolimits_{i=1}^K{\left| \tilde{a}_i \right|^2} \right) \ln 2},
  \end{align}
  where $\left\{ \tilde{a}_i \right\} _{i=1}^{K}$ is fixed point.
\end{lemma}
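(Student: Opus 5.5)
The plan is to reduce the vector statement to the scalar case of Lemma~\ref{le1} with $b=\tilde{b}=1$, using the Cauchy--Schwarz inequality to absorb the cross term. Write $S=\sum_{i=1}^K|a_i|^2$ and $\tilde S=\sum_{i=1}^K|\tilde a_i|^2$, and stack the scalars into vectors $\mathbf{a}=[a_1,\dots,a_K]^{\mathrm T}$ and $\tilde{\mathbf{a}}=[\tilde a_1,\dots,\tilde a_K]^{\mathrm T}$. The right-hand side of \eqref{le2} depends on $\mathbf{a}$ only through two quantities: the norm $S$, and the cross term $\sum_i 2\Re\{\tilde a_i^{*}a_i\}=2\Re\{\tilde{\mathbf a}^{\mathrm H}\mathbf a\}$.

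\textbf{Step 1 (bound the cross term).} For fixed $S$, the only $\mathbf a$-dependence left on the right-hand side is the cross term. By Cauchy--Schwarz, $2\Re\{\tilde{\mathbf a}^{\mathrm H}\mathbf a\}\le 2\|\tilde{\mathbf a}\|\|\mathbf a\|=2\sqrt{S\tilde S}$. Hence the right-hand side of \eqref{le2} is at most
\begin{equation*}
\phi(S)=\log_2(1+\tilde S)-\frac{\tilde S}{\ln 2}+\frac{2\sqrt{S\tilde S}}{\ln 2}-\frac{\tilde S(1+S)}{(1+\tilde S)\ln 2}.
\end{equation*}
The left-hand side $\log_2(1+S)$ also depends only on $S$. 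It therefore suffices to show $\log_2(1+S)\ge\phi(S)$ for all $S\ge 0$.

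\textbf{Step 2 (invoke Lemma~\ref{le1}).} Apply Lemma~\ref{le1} with the real nonnegative scalars $a=\sqrt S$ and $\tilde a=\sqrt{\tilde S}$, and with $b=\tilde b=1$. Then $|a|^2=S$, $|\tilde a|^2=\tilde S$, and $2\Re\{a\tilde a\}=2\sqrt{S\tilde S}$. The right-hand side of Lemma~\ref{le1} then becomes exactly $\phi(S)$, while its left-hand side is $\log_2(1+S)$. Chaining this with Step 1 gives \eqref{le2}.

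\textbf{Step 3 (consistency check).} At $\mathbf a=\tilde{\mathbf a}$, the right-hand side of \eqref{le2} equals $\log_2(1+\tilde S)+(-\tilde S+2\tilde S-\tilde S)/\ln 2=\log_2(1+\tilde S)$. So the bound is tight at the fixed point, as required for the SCA scheme.

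\textbf{Main obstacle and fallback.} The only delicate point is the direction of the Cauchy--Schwarz step: we need an \emph{upper} bound on the surrogate, and the cross term enters with a positive sign, so this direction is the correct one. If one prefers not to rely on Lemma~\ref{le1}, its scalar case can be verified directly. Set $x=\sqrt S$, $\tilde x=\sqrt{\tilde S}$ and multiply through by $\ln 2$. Then $g(x)=\ln(1+x^2)-\ln(1+\tilde x^2)+\tilde x^2-2x\tilde x+\tilde x^2(1+x^2)/(1+\tilde x^2)$ satisfies $g(\tilde x)=0$ and $g'(\tilde x)=0$. One then shows $g\ge 0$ on $[0,\infty)$. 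The cleanest route is the standard fact that $\ln(1+x^2/y)$ is jointly convex in $(x,y)$, which underlies Lemma~\ref{le1}; an elementary sign analysis of $g'$ would also work.
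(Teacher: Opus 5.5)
Your proof is correct, but it takes a different route from the paper's.

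The paper works coordinate by coordinate. It writes $\log_2(S+|a_1|^2)$ with $S=1+\sum_{i\ge 2}|a_i|^2$ treated as fixed, asserts the scalar bound \eqref{eq:e2} with $S$ in place of $1$, and says that repeating this for $i=2,\dots,K$ yields \eqref{le2}. You instead collapse the vector to a scalar in one step. The right-hand side of \eqref{le2} depends on $\mathbf a$ only through $S=\|\mathbf a\|^2$ and $\Re\{\tilde{\mathbf a}^{\mathrm H}\mathbf a\}$, and the latter has a positive coefficient. So Cauchy--Schwarz bounds the right-hand side by $\phi(S)$, and one application of Lemma~\ref{le1} with $b=\tilde b=1$ and real arguments $\sqrt S$, $\sqrt{\tilde S}$ finishes.

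Your route is the one that actually holds up:
\begin{itemize}
\item The paper's \eqref{eq:e2} is not an instance of Lemma~\ref{le1} when $S\neq 1$; taking $b=\tilde b=S$ would put $S$ in the denominators of the correction terms.
\item As written, \eqref{eq:e2} is not tangent to $\log_2(S+|a_1|^2)$ at $a_1=\tilde a_1$ unless $S=1$, so it fails near the fixed point. For example, $S=4$, $\tilde a_1=1$, $a_1=1.5$ gives about $2.64$ on the left versus about $3.40$ on the right.
\item Chaining coordinatewise bounds whose constants still depend on the free variables $a_2,\dots,a_K$ would not produce the stated form with denominators $1+\sum_i|\tilde a_i|^2$.
\end{itemize}
Your choice of real nonnegative scalars also makes harmless the missing conjugate in the paper's statement of Lemma~\ref{le1} ($\Re\{a\tilde a\}$ instead of $\Re\{\tilde a^{*}a\}$).

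One correction to your fallback: $\ln(1+x^2/y)$ is not jointly convex. Already for $y=1$, the second derivative of $\ln(1+x^2)$ is $2(1-x^2)/(1+x^2)^2<0$ for $|x|>1$. Lemma~\ref{le1} is also not a tangent-plane bound of any function, since its right-hand side contains the concave quadratic term $-|\tilde a|^2|a|^2/(\tilde b(\tilde b+|\tilde a|^2)\ln 2)$. A correct elementary justification of $g\ge 0$ is the MSE identity $\ln(1+x^2)=\max_{w>0,\,u}\{\ln w-w((1-ux)^2+u^2)+1\}$, evaluated at $w=1+\tilde x^2$, $u=\tilde x/(1+\tilde x^2)$. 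This does not affect your main argument, which uses Lemma~\ref{le1} only as stated.
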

\begin{proof}

By using Lemma \ref{le1} with $b=\tilde{b}=1$, we obtain
\begin{align} 
  \log _2\left( 1+\left| a \right|^2 \right) &\ge \log _2\left( 1+\left| \tilde{a} \right|^2 \right) -\frac{\left| \tilde{a} \right|^2}{\ln 2}   \nonumber \\ 
  & +\left. \frac{2\Re \left\{ \tilde{a}^*a \right\}}{\ln 2}-\frac{\left| \tilde{a} \right|^2\left( 1+\left| a \right|^2 \right)}{\left( 1+\left| \tilde{a} \right|^2 \right) \ln 2} \right. .
\end{align}

Then, by fixing $a_i$ for $i=2,\dots ,K$, it follows
\begin{align} \label{eq:e2}
  \log _2\left( S+\left| a_1 \right|^2 \right) &\ge \log _2\left( S+\left| \tilde{a_1} \right|^2 \right) -\frac{\left| \tilde{a_1} \right|^2}{\ln 2}  + \frac{2\Re \left\{ \tilde{a_1}^*a_1 \right\}}{\ln 2}\nonumber  \\
  & \quad - \frac{\left| \tilde{a_1} \right|^2\left( S+\left| a_1 \right|^2 \right)}{\left( S+\left| \tilde{a_1} \right|^2 \right) \ln 2},
\end{align}
where $S=1+\sum\nolimits_{i=2}^K{\left| a_i \right|}^2$. Thus, by using \eqref{eq:e2} for $a_i$ with fixed other $a_i$ from $i = 2$ to $i = K$, we have \eqref{le2}.
\end{proof}

According to \eqref{le2}, the first term of \eqref{eq:e1} can be approximated as 
\begin{align}  \label{eq:e3}
  &\log_2\left( 1+\sum_{j=1,j\ne k}^K{\left| \left( \boldsymbol{\theta }^q \right) ^{\mathrm{H}}\bar{\mathbf{H}}_{\mathrm{E}}\mathbf{w}_j \right|^2} \right)   \nonumber \\
  & \ge \log _2\left( 1+\sum_{j=1,j\ne k}^K{\left| \left( \boldsymbol{\theta }^q \right) ^{\mathrm{H}}\bar{\mathbf{H}}_{\mathrm{E}}\mathbf{w}_{j}^{q} \right|^2} \right)  \nonumber  \\
  &-\frac{\sum\nolimits_{j=1,j\ne k}^K{\left| \left( \boldsymbol{\theta }^q \right) ^{\mathrm{H}}\bar{\mathbf{H}}_{\mathrm{E}}\mathbf{w}_{j}^{q} \right|^2}}{\ln 2}  \nonumber \\
  & +\frac{\sum\nolimits_{j=1,j\ne k}^K{2\Re \left\{ \mathbf{w}_{j}^{\mathrm{H}}\bar{\mathbf{H}}_{\mathrm{E}}^{\mathrm{H}}\boldsymbol{\theta }^q\left( \boldsymbol{\theta }^q \right) ^{\mathrm{H}}\bar{\mathbf{H}}_{\mathrm{E}}\mathbf{w}_{j}^{q} \right\}}}{\ln 2}
  \nonumber \\ 
  & -\frac{c_3}{\left( 1+c_3 \right) \ln 2}\left( 1+\sum_{j=1,j\ne k}^K{\left| \left( \boldsymbol{\theta }^q \right) ^{\mathrm{H}}\bar{\mathbf{H}}_{\mathrm{E}}\mathbf{w}_j \right|^2} \right),
\end{align}
where $c_3=\sum\nolimits_{j=1,j\ne k}^K{\left| \left( \boldsymbol{\theta }^q \right) ^{\mathrm{H}}\bar{\mathbf{H}}_{\mathrm{E}}\mathbf{w}_{j}^{q} \right|^2}$.

For brevity, let $\Lambda _{e,k}\Lambda _{e,k}^{\mathrm{H}}=\bar{\mathbf{H}}_{\mathrm{E}}^{\mathrm{H}}\boldsymbol{\theta }^q\left( \boldsymbol{\theta }^q \right) ^{\mathrm{H}}\bar{\mathbf{H}}_{\mathrm{E}}$, \eqref{eq:e3} can be rewritten as
  \begin{align}  
  &\log_2\left( 1+\mathbf{W}^{\mathrm{H}}\Lambda _{e,k}\Lambda _{e,k}^{\mathrm{H}}\mathbf{W} \right)\ge \log _2\left( 1+\left\| \Lambda _{e,k}^{\mathrm{H}}\mathbf{W}^q \right\| ^2 \right) \nonumber \\
  &-\frac{\left\| \Lambda _{e,k}^{\mathrm{H}}\mathbf{W}^q \right\| ^2}{\ln 2}   +\frac{2\Re \left\{ \mathbf{W}^{\mathrm{H}}\Lambda _{e,k}\Lambda _{e,k}^{\mathrm{H}}\mathbf{W}^q \right\}}{\ln 2}  \nonumber \\
  & -\frac{\left\| \Lambda _{e,k}^{\mathrm{H}}\mathbf{W}^q \right\| ^2\left( 1+\left\| \Lambda _{e,k}^{\mathrm{H}}\mathbf{W}\right\| ^2 \right)}{\left( 1+\left\| \Lambda _{e,k}^{\mathrm{H}}\mathbf{W}^q \right\| ^2 \right) \ln 2},
\end{align}
where 
\begin{equation} \label{eq:e4}
  \Lambda _{e,k}\Lambda _{e,k}^{\mathrm{H}}=\mathrm{diag}\left[ \xi _{e,k},\cdots ,\underset{k-\mathrm{th} ~\mathrm{term}}{\underbrace{\mathbf{0}_{M\times M}}},\cdots ,\xi _{e,k} \right].
\end{equation}
In \eqref{eq:e4}, $\xi _{e,k}$ is calculated as $\xi _{e,k}=\bar{\mathbf{H}}_{\mathrm{E}}^{\mathrm{H}}\boldsymbol{\theta}^q\left( \boldsymbol{\theta }^q \right) ^{\mathrm{H}}\bar{\mathbf{H}}_{\mathrm{E}}$. 

Next, we focus on the second term on the right side of \eqref{eq:e1}. Due to the concavity of the logarithm function $\log_2(u)\leq\log_2(u_0)+(u/u_0)-1$, we have \cite{r11,r12}
\begin{equation}
  -\log_2\left( 1+z_{e,k} \right) \ge -\log_2\left( 1+\tilde{z}_{e,k} \right) -\frac{1+z_{e,k}}{\left( 1+\tilde{z}_{e,k} \right) \ln 2}+1,
\end{equation}
where $\tilde{z}_{e,k}=\sum\nolimits_{j=1}^K{\left| \left( \boldsymbol{\theta }^q \right) ^{\mathrm{H}}\bar{\mathbf{H}}_{\mathrm{E}}\mathbf{w}_{j}^{q} \right|^2}$.

Therefore, by disregarding the constant terms, we derive the following problem w.r.t. to $\mathbf{W}_k$:
\begin{subequations}
\begin{align} 
&\mathcal{P}_2:\,\,  \underset{\mathbf{W}}{\min}\,\, \sum_{k=1}^K{\varrho _k \left\{  -\frac{c_12\Re \{\mathbf{w}_{k}^{\mathrm{H}}\bar{\mathbf{H}}_{k}^{\mathrm{H}}\boldsymbol{\theta }^q\left( \boldsymbol{\theta }^q \right) ^{\mathrm{H}}\bar{\mathbf{H}}_k\mathbf{w}_{k}^{q}\}}{\ln 2} \right.}   \nonumber \\
   & \,\,+\frac{c_2\left( \sum\nolimits_{j=1,j\ne k}^K {\left| \left( \boldsymbol{\theta }^q \right) ^{\mathrm{H}}\bar{\mathbf{H}}_k\mathbf{w}_j \right|^2}+1 +\left| \left( \boldsymbol{\theta }^q \right) ^{\mathrm{H}}\bar{\mathbf{H}}_k\mathbf{w}_k \right|^2 \right)}  {\ln 2}  \nonumber \\  
   & \,\,+\frac{z_{e,k}}{\left( 1+\tilde{z}_{e,k} \right) \ln 2}-\frac{2\Re \left\{ \mathbf{W}^{\mathrm{H}}\Lambda _{e,k}\Lambda _{e,k}^{\mathrm{H}}\mathbf{W}^q \right\}}{\ln 2}  \nonumber  \\ 
   & \,\,\left. +\frac{\left\| \Lambda _{e,k}^{\mathrm{H}}\mathbf{W}^q \right\| ^2\left\| \Lambda _{e,k}^{\mathrm{H}}\mathbf{W} \right\| ^2}{\left( 1+\left\| \Lambda _{e,k}^{\mathrm{H}}\mathbf{W}^q \right\| ^2 \right) \ln 2} \right\}
  \\
& \,\, \mathrm{s}. \mathrm{t}. \,\, \sum_{k=1}^K{\left\| \mathbf{w}_k \right\| ^2}\le P_T.  \label{P2b}
\end{align}
\label{eq:P2}
\end{subequations}

It is important to highlight that $\mathcal{P}_2$ is a convex problem, and it can be efficiently addressed using the CVX toolbox.

\subsection{Phase Shift Optimization Subproblem}
In the following analysis, by fixing the beamforming vector $\mathbf{W}$, the phase shift optimization subproblem is considered. The primary challenge in solving the subproblem w.r.t. $\boldsymbol{\theta}$ stems from the  unit modulus constraint. Specifically, near the point $\left\{ \mathbf{w}_{k}^{q},\boldsymbol{\theta }^q \right\}$, the subproblem approximation related to $\boldsymbol{\theta}$ can be expressed as follows:
\begin{subequations}
  \begin{align} 
    \mathcal{P}_3:\,\,& \underset{\boldsymbol{\theta }}{\min}\,\, \sum_{k=1}^K{\varrho _k\left\{ -\frac{c_12\Re \{\boldsymbol{\theta }^{\mathrm{H}}\bar{\mathbf{H}}_k\mathbf{w}_{k}^{q}\left( \mathbf{w}_{k}^{q} \right) ^{\mathrm{H}}\bar{\mathbf{H}}_{k}^{\mathrm{H}}\boldsymbol{\theta }^q\}}{\ln 2} \right.} \nonumber    \\
    &+\frac{c_2 \left( \sum\nolimits_{j=1,j\ne k}^K {\left| \boldsymbol{\theta }^{\mathrm{H}}\bar{\mathbf{H}}_k\mathbf{w}_{j}^{q} \right|^2}+1  +\left| \boldsymbol{\theta }^{\mathrm{H}}\bar{\mathbf{H}}_k\mathbf{w}_{k}^{q} \right|^2 \right)}{\ln 2} \nonumber \\
    & \left. +\frac{\sum\nolimits_{j=1}^K{\left| \boldsymbol{\theta }^{\mathrm{H}}\bar{\mathbf{H}}_{\mathrm{E}}\mathbf{w}_{j}^{q} \right|^2}}{\left( 1+\tilde{z}_{e,k} \right) \ln 2} \right. \nonumber \\ \label{P3a}
    & \left.-\log_2\left( 1+\sum_{j=1,j\ne k}^K{\left| \boldsymbol{\theta }^{\mathrm{H}}\bar{\mathbf{H}}_{\mathrm{E}}\mathbf{w}_{j}^{q} \right|^2} \right) \right\}  \\ 
    & \mathrm{s}.\mathrm{t}. \,\, \left| \theta _n \right|=1,\forall n\in \mathcal{N}.
  \end{align}
  \label{eq:P3}
\end{subequations}
For $\log_2\left( 1+\sum_{j=1,j\ne k}^K{\left| \boldsymbol{\theta }^{\mathrm{H}}\bar{\mathbf{H}}_{\mathrm{E}}\mathbf{w}_{j}^{q} \right|^2} \right) $ in \eqref{P3a}, it can be approximated as
  \begin{align}
  &\log_2\left( 1+\boldsymbol{\theta }^{\mathrm{H}}\Upsilon _{e,k}\Upsilon _{e,k}^{\mathrm{H}}\boldsymbol{\theta } \right) \ge \log _2\left( 1+\left\| \Upsilon _{e,k}^{\mathrm{H}}\boldsymbol{\theta }^q \right\| ^2 \right) \nonumber \\ 
  & -\frac{\left\| \Upsilon _{e,k}^{\mathrm{H}}\boldsymbol{\theta }^q \right\| ^2}{\ln 2}  +\frac{2\Re \left\{ \boldsymbol{\theta }^{\mathrm{H}}\Upsilon _{e,k}\Upsilon _{e,k}^{\mathrm{H}}\boldsymbol{\theta }^q \right\}}{\ln 2} \nonumber \\ 
  & -\frac{\left\| \Upsilon _{e,k}^{\mathrm{H}}\boldsymbol{\theta }^q \right\| ^2\left( 1+\left\| \Upsilon _{e,k}^{\mathrm{H}}\boldsymbol{\theta } \right\| ^2 \right)}{\left( 1+\left\| \Upsilon _{e,k}^{\mathrm{H}}\boldsymbol{\theta }^q \right\| ^2 \right) \ln 2}. 
\end{align}
where $\Upsilon _{e,k}\Upsilon _{e,k}^{\mathrm{H}}=\sum_{j=1,j\ne k}^K{\bar{\mathbf{H}}_{\mathrm{E}}\mathbf{w}_{j}^{q}\left( \mathbf{w}_{j}^{q} \right) ^{\mathrm{H}}\bar{\mathbf{H}}_{\mathrm{E}}^{\mathrm{H}}}$. By neglecting the constant terms, $\mathcal{P}_3$ can be rewritten as 
\begin{subequations}
  \begin{align} 
    \mathcal{P}_4: \,\, &  \underset{\boldsymbol{\theta }}{\min}\,\, \sum_{k=1}^K{\varrho _k\left\{ -\frac{c_12\Re \{\boldsymbol{\theta }^{\mathrm{H}}\bar{\mathbf{H}}_k\mathbf{w}_{k}^{q}\left( \mathbf{w}_{k}^{q} \right) ^{\mathrm{H}}\bar{\mathbf{H}}_{k}^{\mathrm{H}}\boldsymbol{\theta }^q\}}{\ln 2} \right.} \nonumber \\
    & +\frac{c_2\left( \sum\nolimits_{j=1,j\ne k}^K  {\left| \boldsymbol{\theta }^{\mathrm{H}}\bar{\mathbf{H}}_k\mathbf{w}_{j}^{q} \right|^2}+1  +\left| \boldsymbol{\theta }^{\mathrm{H}}\bar{\mathbf{H}}_k\mathbf{w}_{k}^{q} \right|^2 \right) }{\ln 2}\nonumber \\
    & +\frac{\sum\nolimits_{j=1}^K{\left| \boldsymbol{\theta }^{\mathrm{H}}\bar{\mathbf{H}}_{\mathrm{E}}\mathbf{w}_{j}^{q} \right|^2}}{\left( 1+\tilde{z}_{e,k} \right) \ln 2}-\frac{2\Re \left\{ \boldsymbol{\theta }^{\mathrm{H}}\Upsilon _{e,k}\Upsilon _{e,k}^{\mathrm{H}}\boldsymbol{\theta }^q \right\}}{\ln 2}  \nonumber    \\
    & \left. +\frac{\left\| \Upsilon _{e,k}^{\mathrm{H}}\boldsymbol{\theta }^q \right\| ^2\left\| \Upsilon _{e,k}^{\mathrm{H}}\boldsymbol{\theta } \right\| ^2}{\left( 1+\left\| \Upsilon _{e,k}^{\mathrm{H}}\boldsymbol{\theta }^q \right\| ^2 \right) \ln 2} \right\}  \\
    & \mathrm{s}.\mathrm{t}. \,\, \left| \theta _n \right|=1,\forall n\in \mathcal{N}. \label{P4b}
  \end{align}
  \label{eq:P4}
\end{subequations}

Let $\boldsymbol{\varpi }=\sum_{j=1}^K{\mathbf{w}_{j}^{q}\left( \mathbf{w}_{j}^{q} \right) ^{\mathrm{H}}}$,
and then $\sum\nolimits_{j=1,j\ne k}^K{\left| \boldsymbol{\theta }^{\mathrm{H}}\bar{\mathbf{H}}_k\mathbf{w}_{j}^{q} \right|^2}+1+\left| \boldsymbol{\theta }^{\mathrm{H}}\bar{\mathbf{H}}_k\mathbf{w}_{k}^{q} \right|^2=\boldsymbol{\theta }^{\mathrm{H}}\bar{\mathbf{H}}_k\boldsymbol{\varpi }\bar{\mathbf{H}}_{k}^{\mathrm{H}}\boldsymbol{\theta }+1$, $\sum\nolimits_{j=1}^K{\left| \boldsymbol{\theta }^{\mathrm{H}}\bar{\mathbf{H}}_{\mathrm{E}}\mathbf{w}_{j}^{q} \right|^2}=\boldsymbol{\theta }^{\mathrm{H}}\bar{\mathbf{H}}_{\mathrm{E}}\boldsymbol{\varpi }\bar{\mathbf{H}}_{\mathrm{E}}^{\mathrm{H}}\boldsymbol{\theta }$. Then, we have
\begin{subequations}
\begin{align}
  \mathbf{A}_k&=\frac{c_2\left( \bar{\mathbf{H}}_k\boldsymbol{\varpi }\bar{\mathbf{H}}_{k}^{\mathrm{H}} \right)}{\ln 2}+\frac{\bar{\mathbf{H}}_{\mathrm{E}}\boldsymbol{\varpi }\bar{\mathbf{H}}_{\mathrm{E}}^{\mathrm{H}}}{\left( 1+\tilde{z}_{e,k} \right) \ln 2} \frac{\Upsilon _{e,k}\Upsilon _{e,k}^{\mathrm{H}}\left\| \Upsilon _{e,k}^{\mathrm{H}}\boldsymbol{\theta }^q \right\| ^2}{\left( 1+\left\| \Upsilon _{e,k}^{\mathrm{H}}\boldsymbol{\theta }^q \right\| ^2 \right) \ln 2},
\end{align}

\begin{equation}
  \mathbf{v}_k=\frac{\Upsilon _{e,k}\Upsilon _{e,k}^{\mathrm{H}}\boldsymbol{\theta }^q}{\ln 2}+\frac{c_1\left( \bar{\mathbf{H}}_k\mathbf{w}_{k}^{q}\left( \mathbf{w}_{k}^{q} \right) ^{\mathrm{H}}\bar{\mathbf{H}}_{k}^{\mathrm{H}}\boldsymbol{\theta }^q \right) \,\,}{\ln 2}.
\end{equation}
\end{subequations}

Hence, we can rewrite the problem as follows:
\begin{subequations}
\begin{align}
  \mathcal{P}_5: \underset{\boldsymbol{\theta }}{\min}\,\,& \boldsymbol{\theta }^{\mathrm{H}}\mathbf{A}\boldsymbol{\theta }-2\Re \left\{ \boldsymbol{\theta }^{\mathrm{H}}\mathbf{v} \right\} \\
  \mathrm{s}.\mathrm{t}. \,\, & \left| \theta _n \right|=1,\forall n\in \mathcal{N}, \label{P5b}
\end{align}
\label{eq:P5}
\end{subequations}
where $\mathbf{A}=\,\,\sum_{k=1}^K{\varrho _k\mathbf{A}_k}$, and $\mathbf{v}=\,\,\sum_{k=1}^K{\varrho _k\mathbf{v}_k}$.

The simplified subproblem $\mathcal{P}_5$ in \eqref{eq:P5} is a non-convex unit-modulus quadratic program due to the constraint in \cite{r13new}. In principle, several optimization techniques can be employed to handle this type of problem, such as semidefinite relaxation (SDR), element-wise BCD, projected gradient methods, Riemannian manifold optimization, majorization-minimization (MM), and penalty-based methods. However, these methods have different limitations in the considered ARIS phase shift design. For example, SDR requires matrix lifting and usually relies on Gaussian randomization to recover a feasible unit-modulus solution, which may lead to high computational complexity for large-scale ARIS arrays. Element-wise BCD updates the reflecting coefficients sequentially and may suffer from slow convergence when the number of ARIS elements is large. Projected gradient and manifold-based methods can handle the unit-modulus constraint, but they generally require step-size selection, line search, or retraction operations, which increases implementation complexity within the outer BCD framework. To clarify the motivation for using ADMM in this work, a qualitative comparison of representative candidate solvers is provided in Table~\ref{tab_solver_comparison}.

\begin{table*}[!t]  
  \centering
  \caption{Qualitative comparison of candidate solvers for the unit-modulus QP}
  \label{tab_solver_comparison}
  \footnotesize
  \renewcommand{\arraystretch}{1.15}
  \begin{tabular}{p{0.35\columnwidth} p{0.65\columnwidth} p{0.7\columnwidth}}
  \hline
  \textbf{Method} & \textbf{Main consideration} & \textbf{Reason for not being adopted} \\
  \hline
  SDR & 
  Transforms the problem into a lifted semidefinite program. & 
  High computational complexity; Gaussian randomization is usually needed to recover a feasible unit-modulus solution. \\
  \hline
  Element-wise BCD & 
  Updates one reflecting coefficient at a time. & 
  Sequential updates may become inefficient when the number of ARIS elements is large. \\
  \hline
  Projected gradient & 
  Performs gradient update followed by projection onto the unit-modulus set. & 
  Performance is sensitive to step-size selection and may require many iterations. \\
  \hline
  Riemannian optimization & 
  Optimizes directly on the complex unit-modulus manifold. & 
  Usually requires line search or retraction operations, which increases implementation complexity. \\
  \hline
  MM / penalty methods & 
  Solves a sequence of surrogate or penalized subproblems. & 
  Penalty or surrogate design may affect convergence behavior and implementation efficiency. \\
  \hline
  ADMM & 
  Decouples the quadratic objective and the unit-modulus constraint by an auxiliary variable. & 
  Adopted in this work because it yields closed-form updates, preserves unit-modulus feasibility, and fits well into the outer BCD framework. \\
  \hline
  \end{tabular}
\end{table*}

In this work, ADMM is adopted not because it is universally superior to all alternative solvers, but because it well matches the structure of $\mathcal{P}_5$.
To adapt \eqref{eq:P5} for the ADMM framework, the slack variable $\boldsymbol{r}\in \mathbb{C} ^N$ is introduced, and the problem $\mathcal{P}_5$ can be rewritten as
\begin{subequations}
  \begin{align}
    \mathcal{P}_6:\underset{\boldsymbol{r},\boldsymbol{\theta }}{\min}\,\,& \boldsymbol{r}^{\mathrm{H}}\mathbf{A}\boldsymbol{r}-2\Re \left\{ \boldsymbol{r}^{\mathrm{H}}\mathbf{v} \right\}  \\
    \mathrm{s}.\mathrm{t}. \,\, & \left| \theta _n \right|=1,\forall n\in \mathcal{N},  \label{P6b} \\ 
       & \boldsymbol{r}=\boldsymbol{\theta }. \label{P6c}
  \end{align}
  \label{eq:P6}
\end{subequations}

The augmented Lagrangian equation of $\mathcal{P}_6$ can be given as 
\begin{align}
  \mathcal{L} \left( \boldsymbol{r},\boldsymbol{\theta },\boldsymbol{p} \right) &=\boldsymbol{r}^{\mathrm{H}}\mathbf{A}\boldsymbol{r}-2\Re \left\{ \boldsymbol{r}^{\mathrm{H}}\mathbf{v} \right\} \nonumber   \\
  & \quad -\Re \left\{ \boldsymbol{p}^{\mathrm{H}}\left( \boldsymbol{r}-\boldsymbol{\theta } \right) \right\} +\frac{\epsilon }{2}\left\| \boldsymbol{r}-\boldsymbol{\theta } \right\| ^2,
\end{align}
where $\boldsymbol{p}\in \mathbb{C} ^N$ is the Lagrange multiplier corresponding to the constraint \eqref{P6c}, and $\epsilon  >  0$ is the penalty parameter.

Let $\left(\boldsymbol{r}^0,{\boldsymbol{\theta}}^0,\boldsymbol{p}^0\right)$ denote the initial primal-dual variables. We use the superscript $l$ on each variable to represent the iteration index. The standard ADMM algorithm proceeds with the following three steps:    
\begin{subequations}
  \begin{align}
    \boldsymbol{r}^{l+1}&=\mathrm{arg} \min_{\boldsymbol{r}} \,\,\mathcal{L} \left( \boldsymbol{r}^l,\boldsymbol{\theta }^l,\boldsymbol{p}^l \right),  \label{eq:thetaa} \\
    \boldsymbol{\theta }^{l+1}&=\mathrm{arg} \min_{\left| \theta _n \right|=1,\forall n\in \mathcal{N}} \,\,\mathcal{L} \left( \boldsymbol{r}^{l+1},\boldsymbol{\theta }^l,\boldsymbol{p}^l \right),  \label{eq:thetab} \\ 
    \boldsymbol{p}^{l+1}&=\boldsymbol{p}^l-\epsilon  \left( \boldsymbol{r}^{l+1}-\boldsymbol{\theta }^{l+1} \right). \label{eq:thetac}
  \end{align}
\end{subequations}

\subsubsection{Optimization $\boldsymbol{r}$}
Here, the first-order optimization solves the gradient $\partial \mathcal{L} /\partial \boldsymbol{r}$ as follows:
\begin{equation} \label{eq:p}
  \frac{\partial \mathcal{L}}{\partial \boldsymbol{r}}=2\mathbf{A}\boldsymbol{r}^{l+1}-2\mathbf{v}-\boldsymbol{p}^l+\epsilon  \left( \boldsymbol{r}^{l+1}-\boldsymbol{\theta }^{l} \right).
\end{equation}

Let \eqref{eq:p} be equal to zero, we have
\begin{equation}  \label{eq:r}
\boldsymbol{r}^{l+1}=\left( 2\mathbf{A}+\epsilon  \mathbf{I} \right) ^{-1}\left( 2\mathbf{v}+\boldsymbol{p}^l+\epsilon  \boldsymbol{\theta }^l \right).
\end{equation}

\subsubsection{Optimization $\boldsymbol{\theta }$}
The solution of \eqref{eq:thetab} is equivalent to
\begin{equation}
\min_{\left| \theta _n \right|=1,\forall n\in \mathcal{N}} \,\,\left\| \boldsymbol{\theta }-\left( \boldsymbol{r}^{l+1}-\epsilon  ^{-1}\boldsymbol{p}^l \right) \right\|,
\end{equation}
which has a closed-form solution
\begin{equation}\label{eq:31}
  \begin{split}
  {\left[ \boldsymbol{\theta }^{l+1} \right]_{n}} = \left\{ 
    \begin{array}{l}
   \frac{\left[ \boldsymbol{r}^{l+1}-\epsilon  ^{-1}\boldsymbol{p}^l \right]_n}{\left| \left[ \boldsymbol{r}^{l+1}-\epsilon  ^{-1}\boldsymbol{p}^l \right] _n \right|}, \, {\left[ \boldsymbol{r}^{l+1}-\epsilon  ^{-1}\boldsymbol{p}^l \right] _n} \ne 0,  \\  \\
   {\left[ \boldsymbol{\theta }^l \right]_{ n}}, \quad \quad \quad \thinspace\thinspace\thinspace~ {\left[ \boldsymbol{r}^{l+1}-\epsilon  ^{-1}\boldsymbol{p}^l \right]_{{\kern 1pt} n}} = 0,{\rm{ }}
  \end{array} \right.
  \end{split}
\end{equation}
where $\left[ \boldsymbol{\theta } \right] _n$ denotes the $n$th entry of $\boldsymbol{\theta }$.

\subsubsection{Optimization $\boldsymbol{p}$}
From \eqref{eq:p}, we can solve \eqref{eq:thetac} as
\begin{equation}
  \boldsymbol{p}^{l+1}=2\mathbf{A}\boldsymbol{r}^{l+1}-2\mathbf{v}.
\end{equation}

The three aforementioned subproblems can be resolved using closed-form solutions. To ensure the convergence of the ADMM algorithm, the following lemma is provided \cite{c32}.
\begin{lemma}
  The ADMM algorithm is guaranteed to converge, regardless of whether $\boldsymbol{\theta }$ is attached to the continuous set, provided that the penalty factor $\epsilon $ satisfies:
  \begin{equation}
    \frac{\epsilon }{2}\mathbf{I}-\mathbf{A}>\mathbf{0}.
  \end{equation}
\end{lemma}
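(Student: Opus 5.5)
We will follow the standard nonconvex ADMM convergence argument (as in \cite{c32}): we take the augmented Lagrangian $\mathcal{L}$ as a potential function and show it decreases monotonically and is bounded below. Convergence of $\left(\boldsymbol{r}^l,\boldsymbol{\theta}^l,\boldsymbol{p}^l\right)$ to a stationary point then follows. Denote $f(\boldsymbol{r})=\boldsymbol{r}^{\mathrm{H}}\mathbf{A}\boldsymbol{r}-2\Re\{\boldsymbol{r}^{\mathrm{H}}\mathbf{v}\}$. The only place the feasible set of $\boldsymbol{\theta}$ enters is the $\boldsymbol{\theta}$-update, which is an exact minimization over that set (continuous unit circle or a discrete subset). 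Hence the argument never uses convexity of the set. This is why the statement holds whether or not $\boldsymbol{\theta}$ lies in the continuous set.

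\emph{Step 1 (dual update).} From the optimality condition \eqref{eq:p} together with \eqref{eq:thetac} we already have $\boldsymbol{p}^{l+1}=2\mathbf{A}\boldsymbol{r}^{l+1}-2\mathbf{v}=\nabla f(\boldsymbol{r}^{l+1})$. Therefore
\begin{equation}
\|\boldsymbol{p}^{l+1}-\boldsymbol{p}^{l}\|\le 2\lambda_{\max}(\mathbf{A})\|\boldsymbol{r}^{l+1}-\boldsymbol{r}^{l}\|.
\end{equation}
This bounds the dual increase in terms of primal progress.

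\emph{Step 2 (sufficient decrease).} We split
\begin{align}
\mathcal{L}^{l+1}-\mathcal{L}^{l}&=\left[\mathcal{L}(\boldsymbol{r}^{l+1},\boldsymbol{\theta}^{l},\boldsymbol{p}^{l})-\mathcal{L}(\boldsymbol{r}^{l},\boldsymbol{\theta}^{l},\boldsymbol{p}^{l})\right]\nonumber\\
&\quad+\left[\mathcal{L}(\boldsymbol{r}^{l+1},\boldsymbol{\theta}^{l+1},\boldsymbol{p}^{l})-\mathcal{L}(\boldsymbol{r}^{l+1},\boldsymbol{\theta}^{l},\boldsymbol{p}^{l})\right]\nonumber\\
&\quad+\left[\mathcal{L}(\boldsymbol{r}^{l+1},\boldsymbol{\theta}^{l+1},\boldsymbol{p}^{l+1})-\mathcal{L}(\boldsymbol{r}^{l+1},\boldsymbol{\theta}^{l+1},\boldsymbol{p}^{l})\right].
\end{align}
\begin{itemize}
\item The second bracket is $\le 0$ because the $\boldsymbol{\theta}$-step is an exact minimization over the feasible set.
\item The $\boldsymbol{r}$-subproblem is strongly convex with modulus $2\mathbf{A}+\epsilon\mathbf{I}$, and $\boldsymbol{r}^{l+1}$ is its exact minimizer. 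Hence the first bracket is $\le -\left(\boldsymbol{r}^{l+1}-\boldsymbol{r}^{l}\right)^{\mathrm{H}}\left(\mathbf{A}+\frac{\epsilon}{2}\mathbf{I}\right)\left(\boldsymbol{r}^{l+1}-\boldsymbol{r}^{l}\right)$.
\item The third bracket equals $\frac{1}{\epsilon}\|\boldsymbol{p}^{l+1}-\boldsymbol{p}^{l}\|^2$ (up to the scaling convention of the multiplier). By Step 1 it is at most $\frac{4}{\epsilon}\|\mathbf{A}\|^2\|\boldsymbol{r}^{l+1}-\boldsymbol{r}^{l}\|^2$.
\end{itemize}
Combining the three brackets, we need a matrix inequality guaranteeing that $\mathbf{A}+\frac{\epsilon}{2}\mathbf{I}-\frac{4}{\epsilon}\mathbf{A}^{\mathrm{H}}\mathbf{A}\succ\mathbf{0}$. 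Working eigenvalue-wise on the Hermitian matrix $\mathbf{A}$, we will show this follows from $\frac{\epsilon}{2}\mathbf{I}-\mathbf{A}\succ\mathbf{0}$, possibly after tightening constants. This is the main obstacle: the constants depend delicately on the multiplier sign and scaling in $\mathcal{L}$ (the paper uses $-\Re\{\boldsymbol{p}^{\mathrm{H}}(\cdot)\}$ with penalty $\frac{\epsilon}{2}$). If the crude bound does not close, we would instead use the sharper identity $\boldsymbol{p}^{l+1}-\boldsymbol{p}^{l}=2\mathbf{A}(\boldsymbol{r}^{l+1}-\boldsymbol{r}^{l})$. The condition then reduces to requiring, for each eigenvalue $\lambda$ of $\mathbf{A}$, the scalar polynomial inequality $\lambda+\frac{\epsilon}{2}-\frac{4\lambda^2}{\epsilon}>0$. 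The assumption $\frac{\epsilon}{2}>\lambda$, together with $\mathbf{A}\succeq\mathbf{0}$ from its construction as a sum of PSD terms, should suffice.

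\emph{Step 3 (lower boundedness and limit).} Substituting $\boldsymbol{p}^{l}=\nabla f(\boldsymbol{r}^{l})$, we get
\begin{equation}
\mathcal{L}^{l}=f(\boldsymbol{r}^{l})+\Re\{\nabla f(\boldsymbol{r}^{l})^{\mathrm{H}}(\boldsymbol{\theta}^{l}-\boldsymbol{r}^{l})\}+\frac{\epsilon}{2}\|\boldsymbol{r}^{l}-\boldsymbol{\theta}^{l}\|^2.
\end{equation}
Since the Hessian of $f$ is $2\mathbf{A}\preceq\epsilon\mathbf{I}$ by the hypothesis, the descent lemma gives $\mathcal{L}^{l}\ge f(\boldsymbol{\theta}^{l})$. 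The quantity $f(\boldsymbol{\theta}^{l})$ is bounded below because $\boldsymbol{\theta}^{l}$ lies in a compact set.

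Hence $\mathcal{L}^{l}$ converges, and summing the decrease gives $\|\boldsymbol{r}^{l+1}-\boldsymbol{r}^{l}\|\to 0$. Step 1 then gives $\|\boldsymbol{p}^{l+1}-\boldsymbol{p}^{l}\|\to 0$, and the dual update gives $\|\boldsymbol{r}^{l}-\boldsymbol{\theta}^{l}\|\to 0$. Any limit point is therefore a KKT point of $\mathcal{P}_6$.
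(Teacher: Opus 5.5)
Your Step~1 identity $\boldsymbol p^{l+1}=\nabla f(\boldsymbol r^{l+1})$ does not follow from the iteration you analyze, and both Step~2 and Step~3 depend on it. Your three-bracket split follows the paper's order $\boldsymbol r\to\boldsymbol\theta\to\boldsymbol p$. In that order the $\boldsymbol r$-step uses $\boldsymbol\theta^{l}$, so \eqref{eq:p} gives $2\mathbf A\boldsymbol r^{l+1}-2\mathbf v=\boldsymbol p^{l}-\epsilon(\boldsymbol r^{l+1}-\boldsymbol\theta^{l})$. The dual step \eqref{eq:thetac}, however, uses $\boldsymbol\theta^{l+1}$. Combining the two yields $\boldsymbol p^{l+1}=2\mathbf A\boldsymbol r^{l+1}-2\mathbf v+\epsilon(\boldsymbol\theta^{l+1}-\boldsymbol\theta^{l})$.

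The paper's sentence under ``Optimization $\boldsymbol p$'' makes the same slip. Algorithm~\ref{ADMM} then simply sets $\boldsymbol p^{l+1}=2\mathbf A\boldsymbol r^{l+1}-2\mathbf v$, which is equivalent to $\boldsymbol p^{l+1}=\boldsymbol p^{l}-\epsilon(\boldsymbol r^{l+1}-\boldsymbol\theta^{l})$, i.e., it abandons \eqref{eq:thetac}. Either way, you rely on two relations that cannot both hold unless $\boldsymbol\theta^{l+1}=\boldsymbol\theta^{l}$:
\begin{enumerate}
\item $\boldsymbol p^{l}=\nabla f(\boldsymbol r^{l})$, used to bound the third bracket and to get the lower bound in Step~3;
\item $\boldsymbol r^{l+1}-\boldsymbol\theta^{l+1}=-\epsilon^{-1}(\boldsymbol p^{l+1}-\boldsymbol p^{l})$, used to evaluate the third bracket as $\epsilon^{-1}\|\boldsymbol p^{l+1}-\boldsymbol p^{l}\|^2$ and to conclude $\boldsymbol r^{l}-\boldsymbol\theta^{l}\to\mathbf 0$.
\end{enumerate}
If you keep \eqref{eq:thetac} and the correct relation, then $\boldsymbol p^{l+1}-\boldsymbol p^{l}=2\mathbf A(\boldsymbol r^{l+1}-\boldsymbol r^{l})+\epsilon(\boldsymbol\theta^{l+1}-2\boldsymbol\theta^{l}+\boldsymbol\theta^{l-1})$. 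The third bracket then picks up terms of order $\epsilon\|\boldsymbol\theta^{l+1}-2\boldsymbol\theta^{l}+\boldsymbol\theta^{l-1}\|^2$ that nothing in your decomposition absorbs. Exact projection onto the nonconvex unit-modulus (or discrete) set only makes the $\boldsymbol\theta$-bracket nonpositive; it gives no uniform decrease proportional to $\epsilon\|\boldsymbol\theta^{l+1}-\boldsymbol\theta^{l}\|^2$. This is why standard nonconvex ADMM analyses place the smooth block immediately before the dual step.

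The fix is to analyze the order $\boldsymbol\theta\to\boldsymbol r\to\boldsymbol p$. First, $\boldsymbol\theta^{l+1}$ is the unit-modulus (or discrete) projection of $\boldsymbol r^{l}-\epsilon^{-1}\boldsymbol p^{l}$. Then $\boldsymbol r^{l+1}=(2\mathbf A+\epsilon\mathbf I)^{-1}(2\mathbf v+\boldsymbol p^{l}+\epsilon\boldsymbol\theta^{l+1})$, followed by \eqref{eq:thetac}. Now $\boldsymbol p^{l+1}=\nabla f(\boldsymbol r^{l+1})$ holds genuinely, and your argument goes through with the brackets reordered. Writing $\Delta\boldsymbol r=\boldsymbol r^{l+1}-\boldsymbol r^{l}$:
\begin{enumerate}
\item the $\boldsymbol\theta$-bracket is $\le 0$;
\item the $\boldsymbol r$-bracket equals $-\Delta\boldsymbol r^{\mathrm H}(\mathbf A+\frac{\epsilon}{2}\mathbf I)\Delta\boldsymbol r$ exactly;
\item the dual bracket equals $\frac{4}{\epsilon}\Delta\boldsymbol r^{\mathrm H}\mathbf A^{2}\Delta\boldsymbol r$.
\end{enumerate}

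The constant issue you flagged is then not an obstacle. Eigenvalue-wise, $\lambda+\frac{\epsilon}{2}-\frac{4\lambda^{2}}{\epsilon}=\frac{1}{\epsilon}\left(\frac{\epsilon}{2}-\lambda\right)(\epsilon+4\lambda)$. This is positive exactly when $0\le\lambda<\frac{\epsilon}{2}$, i.e., under the stated hypothesis together with $\mathbf A\succeq\mathbf 0$. The crude $\|\mathbf A\|^2$ bound would instead need the stronger condition $\epsilon>2\sqrt{2}\,\lambda_{\max}(\mathbf A)$.

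Step~3 becomes the exact identity $\mathcal L^{l}=f(\boldsymbol\theta^{l})+(\boldsymbol\theta^{l}-\boldsymbol r^{l})^{\mathrm H}(\frac{\epsilon}{2}\mathbf I-\mathbf A)(\boldsymbol\theta^{l}-\boldsymbol r^{l})$. This also keeps $\boldsymbol r^{l}$ bounded, so limit points exist. The paper itself offers only the citation to \cite{r13new}. You should therefore state explicitly that your proof covers this ordering, i.e., that the iteration as written in the paper must be reordered for your argument to apply.
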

\begin{proof}
 The certification process can be referred to \cite{r13new}.
\end{proof}

The ADMM algorithm is summarized in Algorithm \ref{ADMM}.
\begin{algorithm}[!t]
  \caption{The ADMM Algorithm for Problem $\mathcal{P}_6$}
  \label{ADMM}
  \begin{algorithmic}[1]
  \REQUIRE
  $l = 0$, set a feasible point $\left\{\boldsymbol{r}^0,{\boldsymbol{\theta}}^0,\boldsymbol{p}^0\right\}$, set the accuracy $\varepsilon _R$, and the penalty factor $\epsilon = \ell\left|\left|\mathbf{A}\right|\right|$, where $\ell$ is the minimum integer which satisfies $\epsilon\mathbf{I}/2-\mathbf{A} > 0$.
  \ENSURE
  $\left\{\boldsymbol{r}^{\ast},{\boldsymbol{\theta}}^{\ast},\boldsymbol{p}^{\ast}\right\}$
  \REPEAT
  \STATE Calculate $\boldsymbol{r}^{l+1}$ via \eqref{eq:r}; \\
  Calculate $\boldsymbol{\theta}^{l+1}$ via \eqref{eq:31}; \\
  Calculate $\boldsymbol{p}^{l+1}=2\mathbf{A}\boldsymbol{r}^{l+1}-2\mathbf{v}$; \\
  $l \leftarrow l + 1$;
  \UNTIL $R_s^{l} - R_s^{l-1} < \varepsilon _R$;
  \end{algorithmic}
\end{algorithm}

\subsection{Overall Algorithm and Convergence Analysis}
\begin{algorithm}[!t]
  \caption{The BCD Algorithm for Problem $\mathcal{P}_1$}
  \label{BCD}
  \begin{algorithmic}[1]
  \REQUIRE
  $q = 1$, set $P_T$, $M$, $N$, $\sigma _k$, $\sigma _{e,k}$, $\mathbf{h}_{\mathrm{BR}}$, $\mathbf{h}_{\mathrm{R},k}$, $\mathbf{h}_{\mathrm{RE}}$, and $\varepsilon _R$.
  \ENSURE
    {$\left\{ \mathbf{w}_{k}^{\ast},\boldsymbol{\theta }^{\ast} \right\}$ }
  \REPEAT
  \STATE Obtain $\mathbf{w}_{k}^{q}$ via solving $\mathcal{P}_2$ exploiting CVX, with fixed  $\left\{ \mathbf{w}_{k}^{q-1},\boldsymbol{\theta }^{q-1} \right\}$; \\
  Obtain $\boldsymbol{\theta}^{q}$ via solving $\mathcal{P}_5$ exploiting ADMM method, with fixed $\left\{ \mathbf{w}_{k}^{q},\boldsymbol{\theta }^{q-1} \right\}$; \\
  Update the fixed point $\left\{ \mathbf{w}_{k}^{q-1},\boldsymbol{\theta }^{q-1} \right\}$ exploiting the obtained point $\left\{ \mathbf{w}_{k}^{q},\boldsymbol{\theta }^{q} \right\}$; \\
  $q\leftarrow q + 1$;
  \UNTIL $R_s^{q} - R_s^{q-1} < \varepsilon _R$;
  \end{algorithmic}
\end{algorithm}

Ultimately, we transform  $\mathcal{P}_1$ into a solvable problem, with each of the two subproblems addressed using their respective methods. By integrating the aforementioned steps, we derive the complete BCD approach, which is summarized in Algorithm~\ref{BCD}. In this algorithm, $R_s^q$ represents the WSSR achieved in the $q$th iteration, while $\varepsilon _R$ is the convergence threshold. Furthermore, to ensure the convergence of Algorithm~\ref{BCD}, we present the following two theorems \cite{c32}.

\begin{theorem} \label{theorem1}
  The value of the objective function increases in each iteration of Algorithm~\ref{ADMM}, i.e., $R_{s}\left(\left\{\mathbf{w}_{k}^{q}\right\}_{k=1}^{K},{\boldsymbol{\theta}}^{q}\right) \le R_{s}\left(\left\{\mathbf{w}_{k}^{q+1}\right\}_{k=1}^{K},{\boldsymbol{\theta}}^{q+1}\right) $, which guarantees to converge to a locally optimal point.
\end{theorem}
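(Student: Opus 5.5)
The argument is the standard minorize--maximize (MM) one for block coordinate descent. In each block update the tractable surrogate built from Lemma~1, Lemma~2 and the logarithm concavity bound is a global lower bound of the true objective that is tight at the current point. Improving the surrogate therefore cannot decrease $R_s$. Although the statement refers to Algorithm~\ref{ADMM}, the iterates $\{\mathbf{w}_k^q,\boldsymbol{\theta}^q\}$ are those of the outer loop, Algorithm~\ref{BCD}, with Algorithm~\ref{ADMM} as the inner $\boldsymbol{\theta}$-solver. I will prove the chain
\begin{equation*}
R_s(\mathbf{W}^q,\boldsymbol{\theta}^q)\le R_s(\mathbf{W}^{q+1},\boldsymbol{\theta}^q)\le R_s(\mathbf{W}^{q+1},\boldsymbol{\theta}^{q+1}).
\end{equation*}
The $[\cdot]^+$ truncation can be dropped during the argument. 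One can work with the untruncated weighted sum, which lower-bounds the truncated one and coincides with it whenever the per-user rates are nonnegative, which is the regime the algorithm operates in.

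\emph{Step 1 (beamforming block).} Fix $\boldsymbol{\theta}^q$ and define $f^q(\mathbf{W})$ as the objective of $\mathcal{P}_2$ (negated) with the dropped constants restored. By Lemma~1 applied with $\tilde a=(\boldsymbol{\theta}^q)^{\mathrm H}\bar{\mathbf H}_k\mathbf{w}_k^q$, by Lemma~2 for the first term of \eqref{eq:e1}, and by the tangent bound of $\log_2$ for $-\log_2(1+z_{e,k})$, we get $R_s(\mathbf{W},\boldsymbol{\theta}^q)\ge f^q(\mathbf{W})$ for every feasible $\mathbf{W}$. Each of these bounds holds with equality when its variable equals the fixed point, which is a direct substitution check, so $f^q(\mathbf{W}^q)=R_s(\mathbf{W}^q,\boldsymbol{\theta}^q)$. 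The point $\mathbf{W}^q$ is feasible for \eqref{P2b}, and $\mathbf{W}^{q+1}$ solves the convex problem $\mathcal{P}_2$ globally. Hence
\begin{equation*}
R_s(\mathbf{W}^{q+1},\boldsymbol{\theta}^q)\ge f^q(\mathbf{W}^{q+1})\ge f^q(\mathbf{W}^q)=R_s(\mathbf{W}^q,\boldsymbol{\theta}^q).
\end{equation*}

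\emph{Step 2 (phase block).} The same bounds, now taken in $\boldsymbol{\theta}$ around $(\mathbf{W}^{q+1},\boldsymbol{\theta}^q)$, give a surrogate $g^q(\boldsymbol{\theta})=C-\big(\boldsymbol{\theta}^{\mathrm H}\mathbf{A}\boldsymbol{\theta}-2\Re\{\boldsymbol{\theta}^{\mathrm H}\mathbf{v}\}\big)$ with $R_s(\mathbf{W}^{q+1},\boldsymbol{\theta})\ge g^q(\boldsymbol{\theta})$ and equality at $\boldsymbol{\theta}^q$. The linear terms in $\boldsymbol{\theta}$ come from Lemma~2 and the Lemma~1-type bound on the desired signal.

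\emph{Main obstacle.} The unit-modulus QP $\mathcal{P}_5$ is nonconvex, and ADMM is only guaranteed (by Lemma~3 under $\frac{\epsilon}{2}\mathbf I-\mathbf A>\mathbf 0$) to reach a stationary point, not a global minimizer. So I cannot simply claim $g^q(\boldsymbol{\theta}^{q+1})\ge g^q(\boldsymbol{\theta}^q)$. I would first try the ADMM descent argument of \cite{r13new}: under the penalty condition the augmented Lagrangian decreases monotonically along the iterates. Combined with the identity $\boldsymbol p^{l+1}=2\mathbf A\boldsymbol r^{l+1}-2\mathbf v$, this yields at convergence ($\boldsymbol r=\boldsymbol\theta$) an objective no larger than at initialization. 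ADMM is initialized with $\boldsymbol{\theta}^0=\boldsymbol{r}^0=\boldsymbol{\theta}^q$ and $\boldsymbol{p}^0=2\mathbf A\boldsymbol\theta^q-2\mathbf v$, so that $\mathcal L$ starts at the $\mathcal{P}_5$ value of $\boldsymbol\theta^q$. If this bookkeeping fails, I fall back on a safeguard: accept the ADMM output only if it does not increase the $\mathcal{P}_5$ objective, otherwise keep $\boldsymbol\theta^q$. Either way we get $R_s(\mathbf{W}^{q+1},\boldsymbol{\theta}^{q+1})\ge g^q(\boldsymbol\theta^{q+1})\ge g^q(\boldsymbol\theta^q)=R_s(\mathbf{W}^{q+1},\boldsymbol{\theta}^q)$.

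\emph{Step 3 (convergence).} $R_s$ is bounded above on the compact feasible set, since the power constraint bounds $\mathbf{W}$, $|\theta_n|=1$ bounds $\boldsymbol\theta$, and the SINRs are then bounded. A monotone bounded sequence converges. For the ``locally optimal'' claim, I would argue that the surrogates are tight and match first-order information at the expansion point, because the gradients of the Lemma~1 and Lemma~2 minorants coincide with those of the true terms there. Consequently any limit point that is a fixed point of both block updates satisfies the KKT conditions of $\mathcal P_1$, following the standard BCD/MM stationarity argument of \cite{c32}. In other words, the result is stationarity, which the statement calls local optimality.
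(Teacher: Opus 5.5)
Your argument is the same minorize--maximize chain as the paper's proof in Appendix~\ref{appa}. The paper's $R_s^{ub1}$ and $R_s^{ub2}$ are, up to constants, your $f^q$ and $g^q$; they are lower bounds that are tight at the expansion point, despite the name. The paper's inequalities (a)--(b) are your Steps~1--2.

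Where you depart from the paper, you are more careful than it is. The paper closes the $\boldsymbol{\theta}$-step by calling $\boldsymbol{\theta}^{q+1}$ the optimal solution of ``the convex problem $\mathcal{P}_5$''. That is false for a unit-modulus QP, so your main obstacle is a real hole in the published argument, and your ADMM-descent/safeguard treatment is the repair it needs. Likewise, claiming only convergence of the objective values and stationarity is correct. The paper's appeal to Cauchy's theorem for convergence of the iterates does not follow from monotonicity and boundedness.

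\textbf{Two points to tighten.} First, the descent route needs the update order $\boldsymbol{\theta}\to\boldsymbol{r}\to\boldsymbol{p}$, so that $\boldsymbol{p}^{l+1}=\boldsymbol{p}^l-\epsilon(\boldsymbol{r}^{l+1}-\boldsymbol{\theta}^{l+1})=2\mathbf{A}\boldsymbol{r}^{l+1}-2\mathbf{v}$.
\begin{itemize}
\item With that order and $\Delta=\boldsymbol{r}^{l+1}-\boldsymbol{r}^l$, the $\boldsymbol{r}$-step lowers $\mathcal{L}$ by $\Delta^{\mathrm H}(\mathbf{A}+\frac{\epsilon}{2}\mathbf{I})\Delta$. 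The dual step raises it by $\frac{4}{\epsilon}\Delta^{\mathrm H}\mathbf{A}^2\Delta$. The net change is nonpositive for every $\Delta$ exactly when $\mathbf{A}\preceq\frac{\epsilon}{2}\mathbf{I}$. Your choice of $\boldsymbol{p}^0$ is what makes this hold at $l=0$.
\item In the paper's literal order \eqref{eq:thetaa}--\eqref{eq:thetac}, the assignment $\boldsymbol{p}^{l+1}=2\mathbf{A}\boldsymbol{r}^{l+1}-2\mathbf{v}$ equals $\boldsymbol{p}^l-\epsilon(\boldsymbol{r}^{l+1}-\boldsymbol{\theta}^{l})$, not the dual step. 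The descent lemma then does not apply verbatim, so either reorder the updates or rely on your safeguard.
\item You also do not need to pass to convergence. Write $f$ for the $\mathcal{P}_5$ objective and $\mathcal{L}^l=\mathcal{L}(\boldsymbol{r}^l,\boldsymbol{\theta}^l,\boldsymbol{p}^l)$. The identity for $\boldsymbol{p}^l$ gives $f(\boldsymbol{\theta}^l)=\mathcal{L}^l-\frac{\epsilon}{2}\|\boldsymbol{r}^l-\boldsymbol{\theta}^l\|^2+(\boldsymbol{r}^l-\boldsymbol{\theta}^l)^{\mathrm H}\mathbf{A}(\boldsymbol{r}^l-\boldsymbol{\theta}^l)\le\mathcal{L}^l\le\mathcal{L}^0=f(\boldsymbol{\theta}^q)$ at every iterate. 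So finite termination of ADMM is fine.
\end{itemize}

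Second, your claim that the algorithm ``operates in'' the nonnegative-rate regime is an assumption, not a consequence. Monotonicity of the untruncated sum does not give monotonicity of $\sum_k\varrho_k[\cdot]^+$. For example, with equal weights, per-user rates moving from $(1,-1)$ to $(0.5,-0.4)$ raise the untruncated sum and lower the truncated one. What you and the paper actually prove is monotonicity of the untruncated objective. To cover the truncated one, you would have to replace the surrogate of every user whose current rate is negative by the constant $0$, which is a tight minorant of $[\cdot]^+$ at that point.
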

\begin{proof}
  Kindly consult the Appendix \ref{appa}.
\end{proof}

\begin{theorem} \label{theorem2}
The solution  $ (\{\mathbf{w}_k^\ast\}_{k=1}^K,{\boldsymbol{\theta}}^\ast)$ will eventually converge to a Karush-Kuhn-Tucker (KKT) point.
\end{theorem}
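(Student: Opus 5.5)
The plan is to view Algorithm~\ref{BCD} as a block successive upper-bound minimization (minorize--maximize) scheme and then transfer stationarity from the surrogate subproblems to $\mathcal{P}_1$. Concretely, denote by $g_{\mathbf{W}}(\mathbf{W};\mathbf{W}^q,\boldsymbol{\theta}^q)$ the concave surrogate of $R_s$ built from Lemma~\ref{le1}, the second lemma, and the first-order bound on $-\log_2(1+z_{e,k})$. This is the negative of the objective of $\mathcal{P}_2$ plus the discarded constants. Let $g_{\boldsymbol{\theta}}(\boldsymbol{\theta};\mathbf{W}^{q+1},\boldsymbol{\theta}^q)$ be the corresponding surrogate behind $\mathcal{P}_4$/$\mathcal{P}_5$. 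The first step is to verify three properties for each surrogate:
\begin{enumerate}[(i)]
\item global minorization, $g\le R_s$;
\item tightness at the expansion point, $g(\mathbf{W}^q;\mathbf{W}^q,\boldsymbol{\theta}^q)=R_s(\mathbf{W}^q,\boldsymbol{\theta}^q)$;
\item gradient consistency, $\nabla g=\nabla R_s$ at the expansion point.
\end{enumerate}
Properties (i) and (ii) follow directly from the lemmas, since each inequality holds with equality when $a=\tilde a$ and $b=\tilde b$. Property (iii) is checked by differentiating both sides of each lemma at the fixed point. The bounds there are tangent lower bounds of smooth functions whose difference is nonnegative and vanishes at the fixed point, so that difference has zero gradient. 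This works wherever $R_{s,k}>0$; I would handle the $[\cdot]^+$ by restricting attention to the active set of users with positive secrecy rate near the limit.

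Second, I would use Theorem~\ref{theorem1} together with the compactness of the feasible set. The power ball and the torus $|\theta_n|=1$ are both compact, and $R_s$ is bounded above. Hence the monotone sequence $R_s^q$ converges, and by Bolzano--Weierstrass the iterates $\{\mathbf{W}^q,\boldsymbol{\theta}^q\}$ have a convergent subsequence with limit $(\mathbf{W}^\ast,\boldsymbol{\theta}^\ast)$. Along it, $\mathbf{W}^{q+1}$ maximizes $g_{\mathbf{W}}(\cdot;\mathbf{W}^q,\boldsymbol{\theta}^q)$ over the convex power constraint. Since the surrogates are continuous in both arguments, passing to the limit as in the standard BSUM argument of Razaviyayn et al. shows that $\mathbf{W}^\ast$ maximizes $g_{\mathbf{W}}(\cdot;\mathbf{W}^\ast,\boldsymbol{\theta}^\ast)$. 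This limit step uses the chain $R_s^q\le g(\mathbf{W}^{q+1})\le R_s^{q+1}$ and the convergence of $R_s^q$. Because $\mathcal{P}_2$ is convex with a Slater point, its KKT conditions hold at $\mathbf{W}^\ast$, and property (iii) turns them into the KKT conditions of $\mathcal{P}_1$ with respect to $\mathbf{W}$, with the same multiplier for \eqref{yyaa2}.

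The third step, which I expect to be the main obstacle, is the $\boldsymbol{\theta}$-block. $\mathcal{P}_5$ is nonconvex and is solved only to a stationary point by ADMM, not to global optimality, so the usual exact-maximization BSUM argument does not apply directly. My plan is to rely on the ADMM convergence lemma with $\frac{\epsilon}{2}\mathbf{I}-\mathbf{A}\succ\mathbf{0}$, which (per \cite{r13new}) yields a limit with $\boldsymbol{r}=\boldsymbol{\theta}$ that satisfies the stationarity condition of $\mathcal{P}_5$ on the unit-modulus manifold. Expressed with a real multiplier $\mu_n$ for each constraint $|\theta_n|^2=1$, this condition reads $2(\mathbf{A}\boldsymbol{\theta}-\mathbf{v})+\mathrm{diag}(\boldsymbol{\mu})\boldsymbol{\theta}=\mathbf{0}$. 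I would also require that ADMM be warm-started at $\boldsymbol{\theta}^q$ and not decrease the surrogate, so that the monotonicity in Theorem~\ref{theorem1} is preserved. In the limit $\boldsymbol{\theta}^{q}\to\boldsymbol{\theta}^\ast$, gradient consistency of $g_{\boldsymbol{\theta}}$ at $\boldsymbol{\theta}^\ast$ converts this into the stationarity condition of $R_s(\mathbf{W}^\ast,\cdot)$ under \eqref{yyaa3}.

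Finally, I would combine the two blockwise KKT systems. Since the constraints \eqref{yyaa2} and \eqref{yyaa3} are separable across the blocks and the LICQ holds (each unit-modulus constraint has a nonzero gradient $2\theta_n$), the union of the blockwise KKT conditions is exactly the KKT system of $\mathcal{P}_1$. This shows that $(\{\mathbf{w}_k^\ast\},\boldsymbol{\theta}^\ast)$ is a KKT point.
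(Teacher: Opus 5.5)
Your skeleton matches the paper's: at the limit point you write the KKT conditions of the surrogate problems $\mathcal{P}_2$ and $\mathcal{P}_4$/$\mathcal{P}_5$ and combine them. You also add three things the paper omits. First, gradient consistency, which turns surrogate KKT conditions into KKT conditions of $\mathcal{P}_1$. Second, a subsequential limit, in place of the paper's unsupported claim that the whole iterate sequence converges. Third, the recognition that ADMM returns only a stationary point of the nonconvex $\mathcal{P}_5$. Your $\mathbf{W}$-block argument is sound. $\mathbf{W}^{q_j+1}$ comes from a surrogate expanded at the subsequence point itself, and global optimality in the convex $\mathcal{P}_2$ gives $g_{\mathbf W}(\mathbf W;\mathbf W^{q_j},\boldsymbol\theta^{q_j})\le R_s^{q_j+1}$ for every feasible $\mathbf W$, which survives the limit.

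\textbf{The gap is in the $\boldsymbol\theta$-block}, where you write ``in the limit $\boldsymbol\theta^q\to\boldsymbol\theta^\ast$''. Step~2 gives only $(\mathbf W^{q_j},\boldsymbol\theta^{q_j})\to(\mathbf W^\ast,\boldsymbol\theta^\ast)$ along a subsequence. The ADMM output $\boldsymbol\theta^{q_j}$ is stationary for the surrogate expanded at $(\mathbf W^{q_j},\boldsymbol\theta^{q_j-1})$. The data $\mathbf A,\mathbf v$ of $\mathcal{P}_5$ depend continuously on the expansion point, and the multipliers $\mu_n$ are fixed by them, so the limit passage is legitimate. But it only shows that $\boldsymbol\theta^\ast$ is stationary for $g_{\boldsymbol\theta}(\cdot;\mathbf W^\ast,\tilde{\boldsymbol\theta})$, where $\tilde{\boldsymbol\theta}$ is a cluster point of $\boldsymbol\theta^{q_j-1}$. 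Gradient consistency holds only at the expansion point, so you need $\tilde{\boldsymbol\theta}=\boldsymbol\theta^\ast$, i.e.\ $\|\boldsymbol\theta^{q}-\boldsymbol\theta^{q-1}\|\to0$.

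Nothing in your argument gives this. Convergence of $R_s^q$ says nothing about successive iterates. The sandwich you used for $\mathbf W$ is unavailable here, because ADMM gives no global optimality over the unit-modulus set. You cannot borrow the claim from the paper either: its ``Cauchy's theorem'' step in Appendix~\ref{appa} shows only that the objective values converge, so its proof rests on the same unproven step.

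\textbf{How to close it.} Enforce a sufficient increase in the $\boldsymbol\theta$-step by adding $-\frac{\tau}{2}\|\boldsymbol\theta-\boldsymbol\theta^q\|^2$ to $g_{\boldsymbol\theta}$. On the unit-modulus set this only replaces $\mathbf v$ by $\mathbf v+\frac{\tau}{2}\boldsymbol\theta^q$ in $\mathcal{P}_5$, up to a constant. It preserves (i)--(iii) and leaves the ADMM condition on $\epsilon$ unchanged. Combined with your warm-start safeguard, it yields $R_s^{q+1}-R_s^{q}\ge\frac{\tau}{2}\|\boldsymbol\theta^{q+1}-\boldsymbol\theta^{q}\|^2$. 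The successive differences are then square-summable and your step~3 goes through. The alternative is to state convergence of the iterates as an explicit assumption. You also need the inner ADMM tolerance to vanish across outer iterations, since finitely many ADMM steps only approximate a stationary point.

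\textbf{Handling of $[\cdot]^+$.} Restricting to an active set does not match the algorithm. The surrogates in $\mathcal{P}_2$ and $\mathcal{P}_5$ include every user's term. What is monotonically increased and made stationary is therefore the untruncated sum $\sum_k\varrho_k\bigl(\log_2(1+\gamma_k)-\log_2(1+\gamma_{e,k})\bigr)$, and property (ii) fails for $R_s$ whenever one of these terms is negative. Your conclusion is a KKT point of $\mathcal{P}_1$ only when every user's secrecy rate is strictly positive at the limit.
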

\begin{proof}
  Kindly consult the Appendix \ref{appb}.
\end{proof}

\begin{remark}
{\textbf{Initialization and Sensitivity Discussion:}}
The proposed BCD-SCA-ADMM algorithm requires a feasible initial point. Specifically, the initial beamforming vectors should satisfy
\begin{equation*}
  \sum_{k=1}^K{\left\| \mathbf{w}_k \right\| ^2}\le P_T,
\end{equation*}
and the initial ARIS phase shift vector $\boldsymbol{\theta }^0$ should satisfy
\begin{equation*}
  |\theta _{n}^{0}|=1,\;\forall n\in \mathcal{N}.
\end{equation*}

A simple feasible initialization can be obtained by generating arbitrary nonzero beamforming vectors $\{\tilde{\mathbf{w}}_k^0\}_{k=1}^{K}$ and then normalizing them as
\begin{equation*}
  \mathbf{w}_k^0 =\sqrt{P_T}\frac{\tilde{\mathbf{w}}_k^0}{\sqrt{\displaystyle\sum_{j=1}^{K} \|\tilde{\mathbf{w}}_j^0\|^2}},~\forall k \in \mathcal{K}.
\end{equation*}

The ARIS phase shift can be initialized as $\theta_n^0=e^{j\phi_n^0},~\phi_n^0\in[0,2\pi),$ or simply $\theta_n^0=1, \forall n\in\mathcal N$. In the ADMM procedure, the auxiliary variable and dual variable can be initialized as $\boldsymbol{r}^0=\boldsymbol{\theta }^0$ and $\boldsymbol{p}^0=\mathbf{0}$, respectively.

Since the original WSSR maximization problem is non-convex, the proposed algorithm may converge to different local stationary points under different feasible initializations. Therefore, the final WSSR value can be affected by the initial point. Nevertheless, the convergence guarantee of the proposed algorithm does not depend on a particular initialization. As long as the initial point is feasible, the BCD iterations monotonically improve the objective value and converge to a KKT point, as discussed in Theorems \ref{theorem1} and \ref{theorem2}. Hence, the initialization mainly influences the convergence speed and the attained local stationary point, rather than the feasibility and convergence property of the proposed algorithm.
\end{remark}

\subsection{Complexity Analysis}

In this subsection, we provide the computational complexity analysis of the proposed BCD-based algorithm. In each outer BCD iteration, the main computational cost comes from two parts: the active beamforming optimization subproblem and the ARIS phase shift optimization subproblem.

First, for the active beamforming optimization subproblem, the SCA technique is used to construct a convex approximation around the local point. The dominant arithmetic operations are associated with the construction of the quadratic terms involving the effective cascaded channels and the beamforming vectors. Since there are $K$ users and $K$ beamforming vectors, the desired-signal and inter-user-interference terms involve $K^2$ user-interference pairs. For each pair, the main matrix operations involve the $M$-dimensional BS beamforming vector and the $N$-dimensional ARIS-related cascaded channel matrix, whose complexity is approximately $\mathcal{O}(M^2N)$. Therefore, the computational complexity of constructing the active beamforming subproblem is
$\mathcal{O}(M^2K^2N)$.

Second, for the ARIS phase shift optimization subproblem, the beamforming vectors are fixed and the phase shift vector $\boldsymbol{\theta } \in\mathbb C^N$ is optimized. The main cost is caused by constructing the quadratic matrix and vector in the phase shift subproblem. This step involves matrix products between the $N$-dimensional ARIS phase vector, the $M$-dimensional beamforming vectors, and the cascaded channel matrices. Considering all $K^2$ user-interference pairs, the dominant complexity of constructing the phase shift subproblem is $\mathcal{O}(MK^2N^2)$.

Then, the phase shift optimization subproblem is solved by the ADMM method. According to the ADMM updates, the $\boldsymbol{r}$-update mainly involves matrix-vector multiplications with an $N\times N$ matrix, while the $\boldsymbol{\theta}$-update is an element-wise projection onto the unit-modulus constraint. Hence, the per-iteration complexity of the ADMM update is approximately $\mathcal{O}(N^2)$. Let $T_{\rm ADMM}$ denote the number of ADMM iterations. The total complexity of the ADMM procedure is therefore
$\mathcal{O}(T_{\rm ADMM}N^2)$.

Combining the above three parts, the computational complexity of one BCD iteration is $\mathcal{O}\left(M^2K^2N+MK^2N^2+T_{\rm ADMM}N^2\right)$. Therefore, if $T_o$ denotes the number of outer BCD iterations, the overall computational complexity of Algorithm~\ref{BCD} is given by $\mathcal{O}\left(T_o\left(M^2K^2N+MK^2N^2+T_{ADMM}N^2 \right)\right) $.


\section{Extension of the Investigated System}  \label{add}
In this section, we provide additional details regarding the considered scenario. Specifically, it is difficult to enable continuous adjustment of the phase for the RC due to the constraints imposed by the actual hardware setup. Therefore, we discuss the discrete phase adjustment mode of the RC in Subsection \ref{drc}. Furthermore, given the limitations of practical conditions,  obtaining perfect CSI is challenging. Consequently, we discuss the case of imperfect CSI  in Subsection \ref{imcsi}.

\subsection{Discrete Phase Shift} \label{drc}
In the previous section, this study focuses on the scheme with continuous phase shift. In practice, the ARIS phase shift is typically discrete. Here, we redefine the continuous phase shift case as $\mathcal{F}_1$ and the discrete RC case as $\mathcal{F}_2$ \cite{r16}, i.e.,
\begin{subequations}
  \begin{align}
    &\mathcal{F}_{1}=\left\{\theta_{n}\left|\theta_{n}{=}e^{j\varphi_{n}},\varphi_{n}\in[0,2\pi)\right\},\right. \\
    &\mathcal{F}_{2}=\left\{\theta_{n}\left|\theta_{n}{=}e^{j\varphi_{n}},\varphi_{n}\in\left\{0,\frac{2\pi}{E},\ldots,\frac{2\pi(E-1)}{E}\right\}\right\},\right.
  \end{align}
\end{subequations}
where $E = 2^b$ denotes that $\mathcal{F}_{2}$ has $E$ discrete phase shift values, and $b$ is the number of quantization bits.

In fact, to solve the discrete RC problem, the obtained $\theta_{n} \in \mathcal{F}_{1}$ can be directly applied to the discrete phase shift case, where $\theta_{n} \in \mathcal{F}_{2}$. Specifically, we denote the solution of $\theta_{n} \in \mathcal{F}_{1}$ and $\theta_{n} \in \mathcal{F}_{2}$ as $\theta_{n}^{\prime}$ and $\theta_{n}^{''} $, respectively, and relax $\theta_{n}^{\prime}$ to $\theta_{n}^{''} $, and then derive the suboptimal solution $\boldsymbol{\theta}^\ast$ during each iteration to optimize the continuous $\boldsymbol{\theta}$, thus obtaining the suboptimal $\theta_{n}^{\prime,\ast}$. Next, we map $\theta_{n}^{\prime,\ast}$ onto $\mathcal{F}_{2}$ to derive $\theta_{n}^{''} $, i.e., $\theta_{n}^{''} = e^{j\varphi_{l^{\ast}}} $, where $l^{\star}=\underset{1\leq l \leq E}{\operatorname*{\mathrm{arg}\operatorname*{min}}}\left|\theta_{n}^{\prime,\ast}-e^{j\phi_{l}}\right|$. Thus, we can derive the suboptimal solution for  phase shift optimization in each iteration of the subproblem.

For the discrete RC case, we cannot apply Theorems \ref{theorem1} and \ref{theorem2} directly. However, according to \cite{r16}, as long as the discrete phase shift $\boldsymbol{\theta}^q$ satisfies $f\left({\boldsymbol{\theta}}^q \right) \ge f\left({\boldsymbol{\theta}}^{q-1} \right)$, where $f\left({\boldsymbol{\theta}}^q \right)$ is the objective value of $\mathcal{P}_6$ in the $q$th iteration, the proposed algorithm will converge in the case of discrete RC. The simulation in Section \ref{nr} demonstrates that the discrete RC case also exhibits strong convergence.

\subsection{Imperfect CSI} \label{imcsi}
In the proposed ARIS-assisted HST communication systems, channel estimation is particularly challenging due to the high-dimensional nature of the ARIS channels and the passive characteristics of the Eve. In this context, we analyze the robustness of the proposed joint beamforming approach to CSI error. We assume that all the channels involved are imperfect and can be written as \cite{r17}
\begin{equation}
  h = \tilde{h} + \Delta h.
\end{equation}
where $h$ denotes the actual channel, $\tilde{h}$ represents the estimated channel, and $\Delta h$ signifies the  estimation error with Gaussian distribution and zero mean, i.e., $\Delta h\sim\mathcal{CN}\left(0,\sigma_d^2\right)$. The variance $\sigma_d^2$ satisfies $\sigma_d^2=\delta_d\left|\hat{h}\right|^2$, where $\delta_d$ represents the ratio of the error power $\sigma_d^2$ to the channel gain $\left|\hat{h}\right|^2$, characterizing the CSI error level of the channel estimation. Then, the imperfect CSI is incorporated into the aforementioned BCD algorithm.

\section{Numerical Results} \label{nr}
In this section, we present a numerical evaluation of the performance of the proposed secure ARIS-assisted downlink MU-MISO system for HST communications. The results of the numerical simulations demonstrate the efficacy and benefits of the proposed ARIS-assisted WSSR maximization technique for HST communication systems.

\subsection{Simulation Setup}
The following simulation parameters are assumed: $M = 6$, $N = 100$, $K = 4$, $f_c = 28$ GHz, $P_T = 30$ dBm, $v=360~{\rm{km/h}}$ , $\sigma_{k}^2 = \sigma_{e}^2 = -174~ {\rm{dBm/Hz}} +10\log _{10}B+10 ~\rm{dB}$, where $B = 200$ MHz is the system bandwidth, and the user's weight is assigned as $\varrho_k = 1/K$. The height of ARIS $H_{\rm{R}} = 100$ m, the height of BS $H_{\rm{B}} = 10$ m. This paper considers the railway track as the x-axis of a three-dimensional coordinate system, and the users on the carriage move along the x-axis. Therefore, only the horizontal positions of the users and the eavesdroppers are considered, and their height is 0, i.e., ${H_{\mathrm{U}}=H_{\mathrm{E}}=0}$, the Rician $K$-factor $\kappa_{\rm{R,k}} = \kappa_{\rm{RE}} = \kappa_{\rm{BR}} = 10$ dB, the path loss exponent $\alpha_{\rm{R,k}} = \alpha_{\rm{RE}} = 2.5$, and $\alpha_{\rm{BR}} = 2$.

For comparison, we consider the following baseline cases:
\begin{itemize}
  \item {\bf Ideal Phase Shift:} The phase shift of the ARIS is synchronized with the phase shift of the cascaded channel.
  \item {\bf Imperfect CSI:} All the channels are imperfect, and we incorporate this into the proposed algorithm.
  \item {\bf Discrete Phase Shift:} Discrete phase shift while executing  the procedures outlined in Section \ref{drc}.
  \item {\bf Random Beamforming:} The initial beamforming vector is random, and only the phase shift is optimized without optimizing the beamforming vector.
  \item {\bf Random Phase Shift:} The initial phase shift is random, with optimization applied solely to the beamforming vector, leaving the phase shift unoptimized.
\end{itemize}

\subsection{Convergence of proposed Algorithm}
Here, this study confirms the convergence characteristics of the proposed scheme designed for the joint optimization of active beamforming and discrete ARIS phase shift. From the results of Fi{}g.~{\ref{fig:2}}, the proposed scheme demonstrates effectiveness, illustrating the convergence rates across all evaluated schemes. Notably, the proposed algorithm significantly outperforms both the random beamforming and random phase shift schemes. For optimal beamforming and ARIS phase shift, the WSSR optimization problem involves two key objectives: maximizing the rate for legitimate users and minimizing the rate for Eve. However, the random beamforming scheme optimizes only the ARIS phase shift, while the random phase shift scheme optimizes only beamforming, resulting in suboptimal performance compared to the proposed scheme. This underscores the advantage of jointly optimizing both beamforming and ARIS phase shift for improved performance. 
\begin{figure}[!t]
  \centering
  {\includegraphics[scale=0.5]{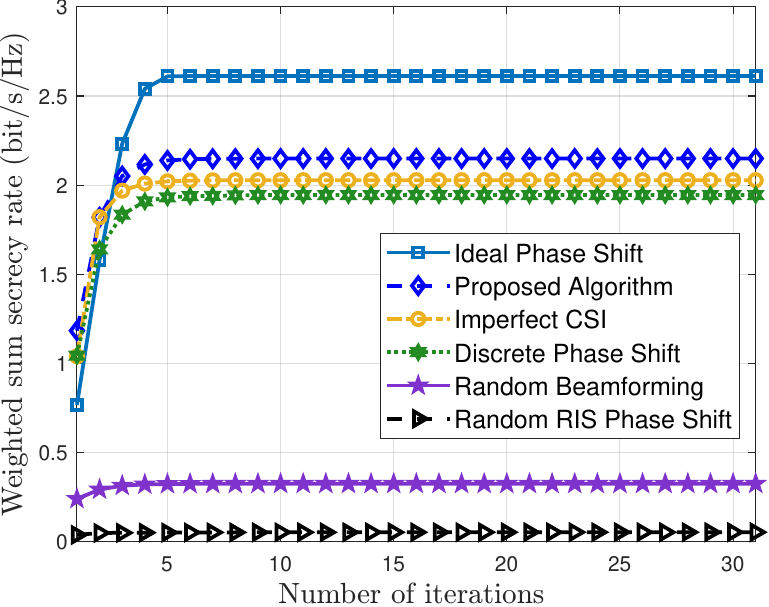}}
  \caption{ \label{fig:2}Weighted sum secrecy rate versus number of iterations.}
\end{figure}

\begin{figure}[!t]
  \centering
  {\includegraphics[scale=0.5]{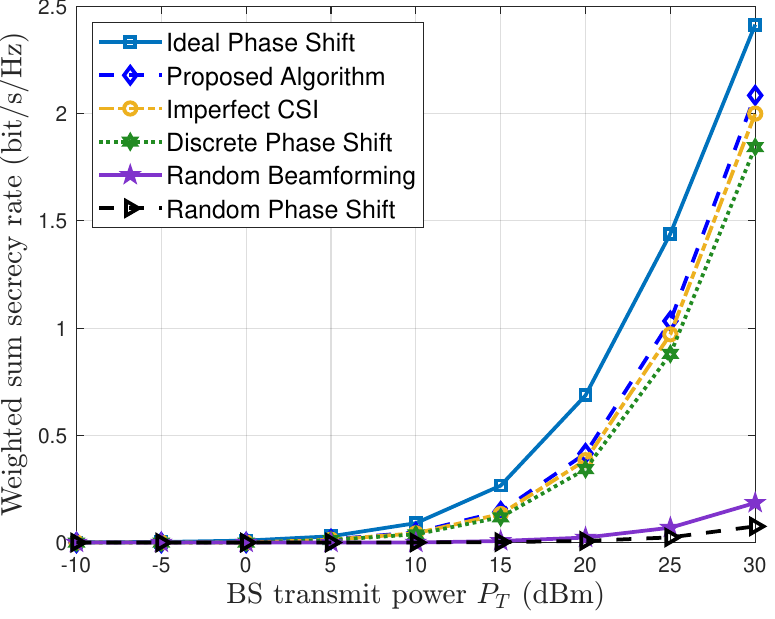}}
  \caption{ \label{fig:3}Weighted sum secrecy rate versus transmit power $P_{T}$.}
\end{figure}
\subsection{Impact of Transmit Power}
Fi{}g.~{\ref{fig:3}} shows the WSSR versus the transmit power $P_{T}$ for different schemes. The results demonstrate that the WSSR increases rapidly as the BS transmit power rises across all six cases. 
Additionally, we observe that the performance gap between the ideal phase shift, the proposed algorithm, the imperfect CSI, the discrete phase, the random beamforming, and the random phase shift schemes widens as $P_{T}$ increases. A key finding is that the performance gain achieved through optimized beamforming and ARIS phase shift is substantial only when $P_{T}$ is high (e.g., from $10$ dBm to $30$ dBm), whereas it becomes negligible at lower values of $P_{T}$ (e.g., from $-10$ dBm to $5$ dBm). This occurs because, at low $P_{T}$, the signals reflected by the ARIS are weak, contributing minimally to performance improvement. Conversely, when $P_{T}$ is sufficiently high, the BS allocates more power to the BS-ARIS-user link. In this scenario, the rate for legitimate users improves significantly, while the rate for Eve is more effectively suppressed, resulting in a notable increase in WSSR. 
However, when $P_{T}$ is sufficiently large, substantial performance improvements can be achieved with optimized beamforming and ARIS phase shift. In such cases, the BS tends to direct  most of its power toward the BS-ARIS-user link, thereby weakening the BS-ARIS-Eve link. 

\begin{figure}[!t]
  \centering
  {\includegraphics[scale=0.5]{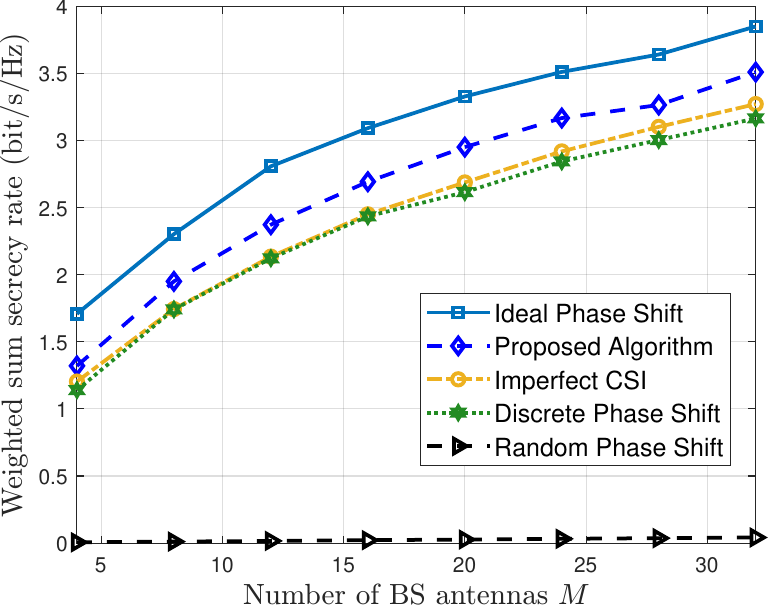}}
  \caption{ \label{fig:4}Weighted sum secrecy rate versus number of antennas $M$.}
\end{figure}
\subsection{Impact of Number of Antennas}
In Fi{}g.~{\ref{fig:4}}, we evaluate the effect of the number of BS transmit antennas $M$. As anticipated, increasing $M$ results in higher WSSR values for all algorithms. This phenomenon can be attributed to the fact that an increase in $M$ not only provides greater diversity gain for legitimate users but also introduces more degrees of freedom for Eve, leading to a simultaneous increase in WSSR. Under the same simulation conditions, when $M = 8$, $M = 16$, $M = 24$ and $M = 32$, the WSSR of the proposed scheme is approximately 181.14-fold, 125.75-fold, 99.84-fold, and 84.06-fold higher than that of the random phase shifting scheme, respectively. This indicates that as $M$ increases, the performance gap between the proposed algorithm and the random phase shift decreases. 
\begin{figure}[!t]
  \centering
  {\includegraphics[scale=0.5]{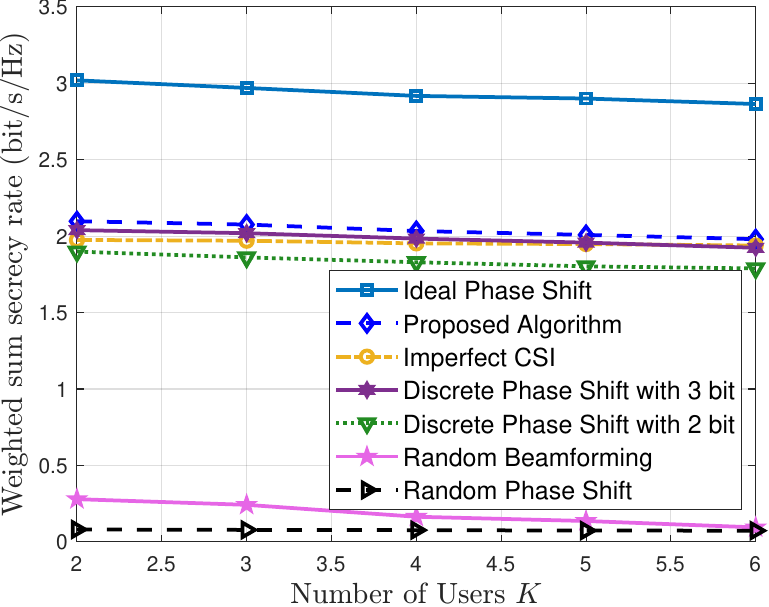}}
  \caption{ \label{fig:5}Weighted sum secrecy rate versus number of users $K$}.
\end{figure}
\subsection{Impact of Number of Users}
In Fi{}g.~{\ref{fig:5}}, we evaluate the effect of the number of users $K$ on WSSR. As anticipated, the WSSR decreases with $K$ for six different cases. The proposed algorithm consistently surpasses both the random beamforming and the random phase shift schemes. 
This decline in WSSR is mainly due to two reasons. First, since we set the weight for each user as ${\varrho _k=1/K}$, the contribution of each individual secrecy rate to the WSSR decreases as $K$ increases. Second, with more users being served simultaneously, the inter-user interference becomes more severe, and the limited transmit power and ARIS reflecting resources need to be shared among more users. As a result, the WSSR performance decreases. Furthermore, the results indicate that WSSR increases as the number of ARIS quantization bits $b$ increases. This improvement stems from the enhanced phase resolution provided by a larger $b$, which increases the channel gain for the users and subsequently improves the WSSR. However, this improvement entails a trade-off, resulting in higher overhead and more complex hardware requirements. Additionally, we observe that $3$ bit discrete phase shift offers performance closer to that of the proposed scheme compared to $2$ bit discrete phase shift. This implies that as $b$ increases, the performance gap between the proposed algorithm and the discrete phase shift narrows.

\begin{figure}[!t]
	\centering
	{\includegraphics[scale=0.5]{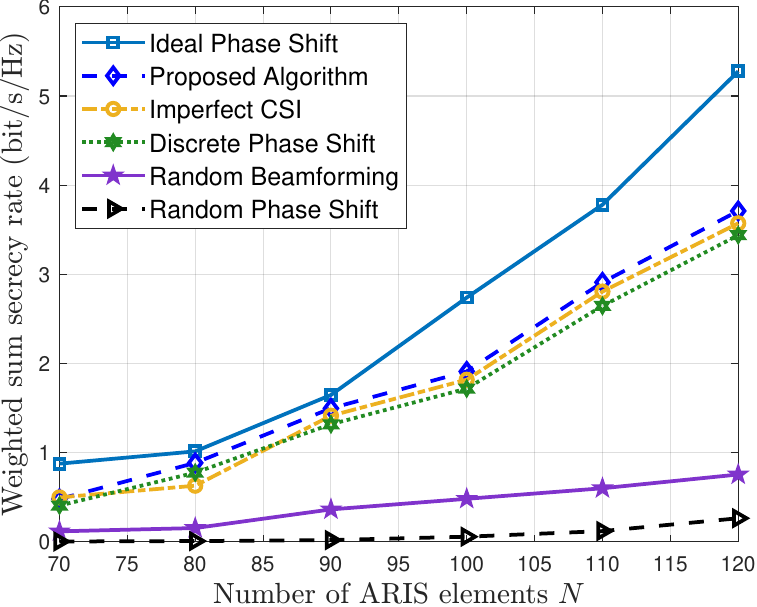}}
	\caption{ \label{fig:6}Weighted sum secrecy rate versus number of ARIS elements $N$.}
\end{figure}
\subsection{Impact of Number of ARIS Elements}
Fi{}g.~{\ref{fig:6}} illustrates the WSSR versus the number of ARIS elements $N$ for six different cases. The results indicate that the cases where both beamforming and ARIS phase shift are optimized outperform the scenarios where only one of these components is optimized. Specifically, the WSSR increases as $N$ increases. This enhancement can be credited to the increase in the number of ARIS elements, which reflects more signals toward the legitimate user, resulting in a stronger received signal. Furthermore, it is evident that the performance gap among the four cases, namely, ideal phase shift, proposed scheme, imperfect CSI, and discrete phase shift, and the two schemes, namely, random beamforming and random phase shift, will widen as $N$ increases. This phenomenon arises because, with more ARIS elements in the system, the ARIS can enhance signal quality by reflecting additional signal paths and energy for legitimate users while more effectively diminishing the signal quality for Eve.

\begin{figure}[!t]
  \centering
  {\includegraphics[scale=0.5]{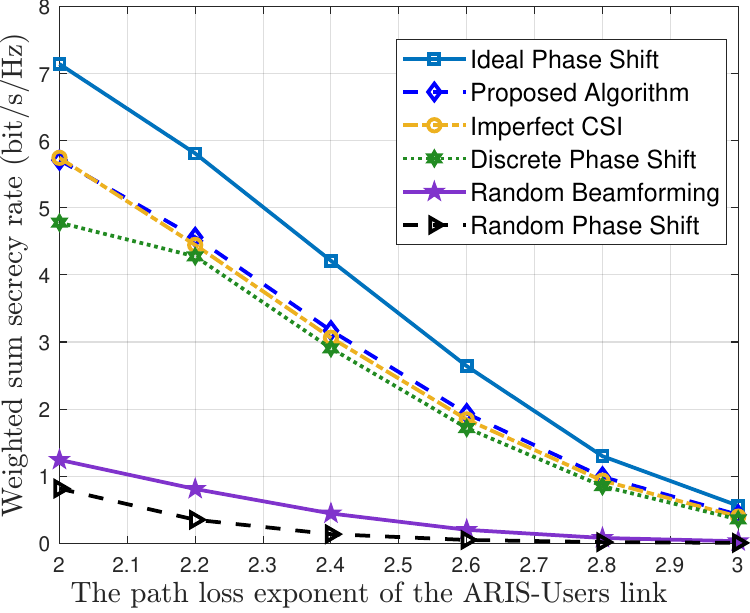}}
  \caption{ \label{fig:7}Weighted sum secrecy rate versus the path loss exponent of the ARIS-user link.}
\end{figure}
\subsection{Impact of Path Loss Exponent of the ARIS-user link}
In Fi{}g.~{\ref{fig:7}}, we evaluate the path loss exponent of the ARIS-user link on WSSR for the six schemes. The results demonstrate that the WSSR decreases as the path loss exponent increases across all six schemes. This phenomenon occurs because higher path loss in the ARIS-user link leads to more severe fading, resulting in a weaker signal reflected by the ARIS.  Consequently, the effectiveness of optimized beamforming and ARIS phase shift is reduced. Furthermore, the performance gap between the four schemes, namely, the ideal phase shift, the proposed algorithm, the imperfect CSI and the discrete phase shift, and the two schemes, namely, the random beamforming and the random phase shift, narrows as the exponent increases, supporting the aforementioned explanation. If the exponent continues to increase, the WSSR for all schemes may approach zero. This indicates that when the path loss exponent of the ARIS-user link is sufficiently high, the WSSR of schemes optimizing both beamforming and ARIS phase shift converges to that the schemes optimizing only one of these factors.

\begin{figure}[!t]
  \centering
  {\includegraphics[scale=0.5]{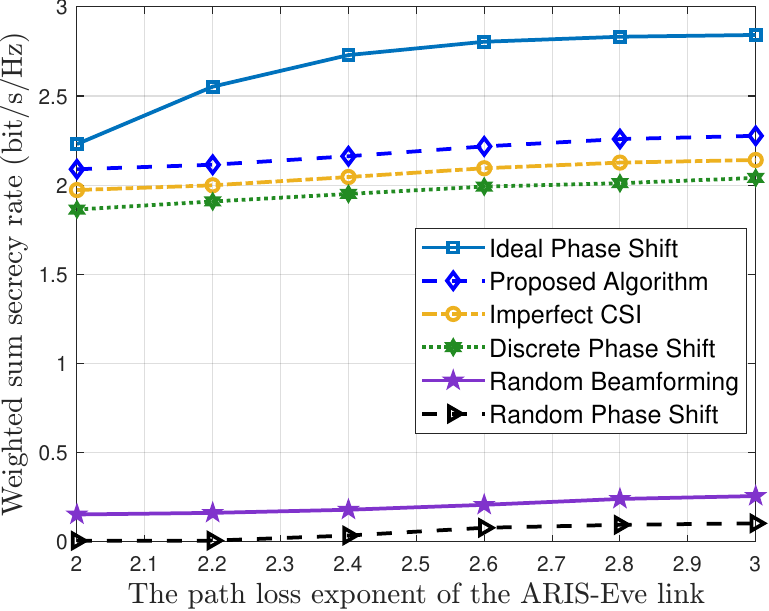}}
  \caption{ \label{fig:8}Weighted sum secrecy rate versus the path loss exponent of the ARIS-Eve link.}
\end{figure}
\subsection{Impact of Path Loss Exponent of the ARIS-Eve Link}
Fi{}g.~{\ref{fig:8}} illustrates the influence of path loss exponent of the ARIS-Eve link on the WSSR across the six schemes. The results demonstrate that the WSSR rises as the path loss exponent of the ARIS-Eve link increases. This phenomenon occurs because stronger large-scale fading attenuates the signal reflected toward Eve, thereby degrading Eve's reception quality. These findings underscore the importance of strategically deploying the ARIS to minimize obstacles in legitimate communication links or to introduce additional obstacles in Eve links, thereby enhancing WSSR performance. Specifically, the ARIS should be deployed to maximize the rate for legitimate users while simultaneously minimizing Eve's rate, ultimately improving the WSSR.

\begin{figure}[!t]
	\centering
	{\includegraphics[scale=0.5]{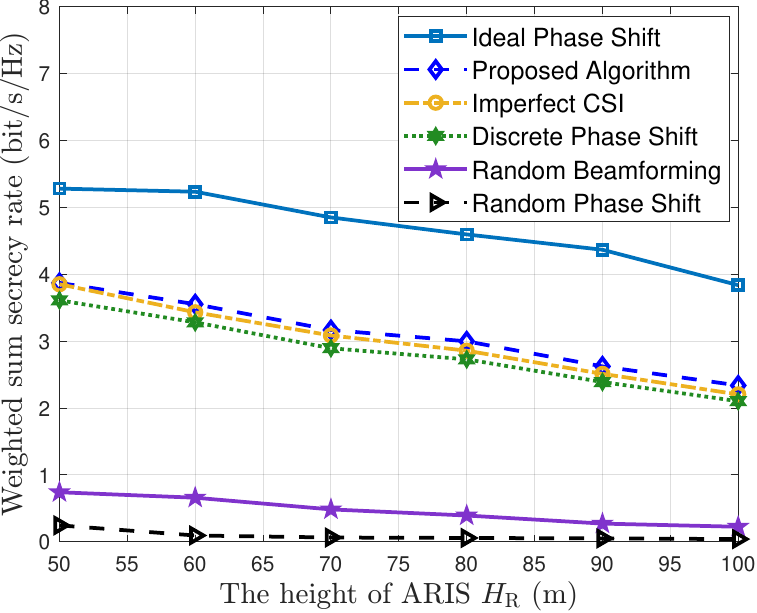}}
	\caption{ \label{fig:9}Weighted sum secrecy rate versus the height of ARIS.}
\end{figure}
\subsection{Impact of Height of ARIS}
Fi{}g.~{\ref{fig:9}} depicts the impact of the ARIS height $H_{\mathrm{R}}$ on the WSSR of the six schemes. Under the adopted distance-dependent Rician channel model with fixed Rician $K$-factor, increasing $H_{\mathrm{R}}$ enlarges the slant distances of the BS-ARIS and ARIS-user links, thereby reducing the corresponding large-scale channel gains and weakening the cascaded legitimate links. Since the adopted model does not include an altitude-dependent LoS-probability gain, the distance-induced attenuation dominates in this setting, and the WSSR decreases with $H_{\mathrm{R}}$. We emphasize that this trend is specific to the adopted channel model and simulation geometry, rather than a universal conclusion for all UAV communication channels. In particular, under elevation-angle-dependent probabilistic LoS/NLoS models, the performance may exhibit non-monotonic behavior with respect to the ARIS height. In our considered scenario, however, the simplified Rician model is used to highlight the effect of joint active beamforming and ARIS phase shift optimization in a severely blocked HST environment. 

\subsection{Impact of K-factor}
Fi{}g.~{\ref{fig:10}} shows the impact of the Rician K-factor on WSSR, which varies from $5$ dB to $12$ dB. The results show that the WSSR increases for all six schemes as $\kappa$ increases. This occurs because a larger Rician K-factor results in a stronger LoS component, thereby improving the overall channel quality between the BS and the legitimate users. The enhanced channel quality leads to a stronger received signal for legitimate users, which consequently increases the WSSR. When $\kappa \to \infty$, the NLoS component becomes negligible, and the channel model converges to a deterministic LoS channel. Conversely, as $\kappa \to 0$, the model degenerates into a Rayleigh fading channel, where the LoS component is absent.
\begin{figure}[!t]
	\centering
	{\includegraphics[scale=0.5]{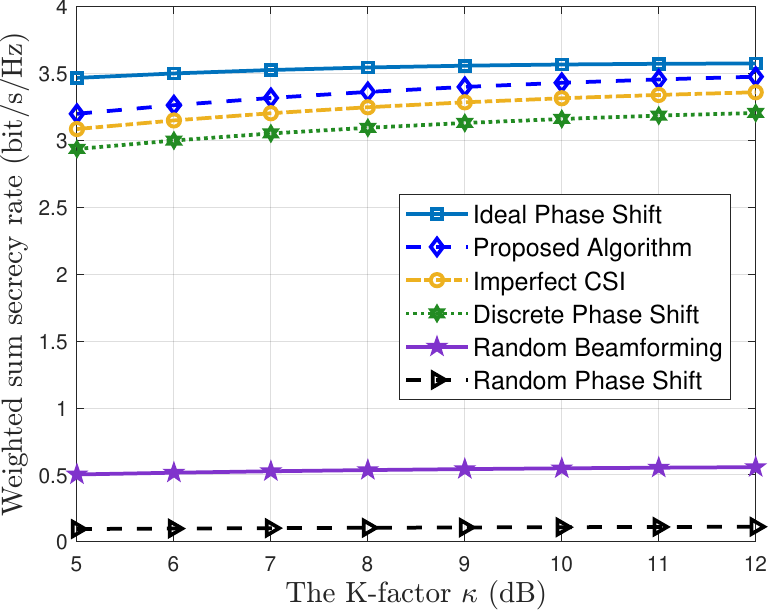}}
	\caption{ \label{fig:10}Weighted sum secrecy rate versus K-factor $\kappa$.}
\end{figure}

\begin{figure}[!t]
	\centering
	{\includegraphics[scale=0.5]{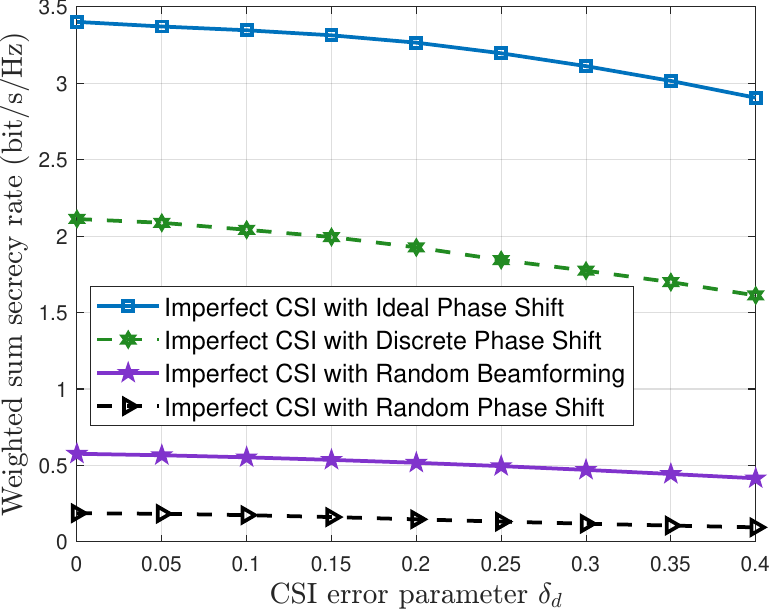}}
	\caption{ \label{fig:11}Weighted sum secrecy rate versus the CSI error level $\delta_d$.}
\end{figure}
\subsection{Impact of CSI Error Level}
Fi{}g.~{\ref{fig:11}} illustrates the impact of the CSI error level $\delta_d$ on the WSSR  of the six schemes, with $\delta_d$ varying from 0 to 0.4. The simulation results show that as the CSI error level increases, the WSSR for all algorithms decreases. Notably, for the ideal phase shift scheme, the WSSR experiences a reduction of $2.58\%$ when the error power is $15\%$ of the channel gain (i.e., $\delta_d = 0.15$), compared to the case with perfect CSI ($\delta_d = 0$). When the CSI error level increases to $\delta_d = 0.35$, the WSSR loss grows to $11.36\%$. This demonstrates that the proposed scheme
exhibits considerable robustness to CSI errors.

\subsection{Quantification of Link Connectivity}
To further clarify the practical service scope of the proposed ARIS-assisted HST communication system, we quantify the link connectivity along the target railway segment. Different from conventional static communication scenarios, the ARIS-user distance changes as the HST moves along the railway track. Therefore, the secure service capability of the ARIS-assisted system is position-dependent.

Let $x$ denote the longitudinal position of the HST relative to the projection point of the ARIS on the railway track. For each position $x$ the ARIS-user distance is updated according to the system geometry as
\begin{equation*}
  d_{\mathrm{R},k}(x)=\sqrt{x^2+d_\perp^2+(H_{\mathrm{R}}-H_{\mathrm{U}})^2},
\end{equation*}
where $d_\perp $ is the horizontal offset between the ARIS projection point and the railway track. Based on the updated distance, the corresponding ARIS-user channel is generated according to the adopted distance-dependent path loss and Rician fading models. Then, the proposed joint active beamforming and ARIS phase shift optimization algorithm is applied to obtain the optimized WSSR at position $x$, which is given by
\begin{equation*}
  R_s^{\star}(x)=\sum_{k=1}^{K}\varrho_k R_{s,k}^{\star}(x).
\end{equation*}

Since the objective of this paper is WSSR maximization, we use the optimized WSSR as the metric for quantifying secure service availability. Specifically, the ARIS-assisted secure service is regarded as available at position $x$ when
\begin{equation*}
  R_s^{\star}(x)\ge R_{\rm WSSR}^{\rm th},
\end{equation*}
where $R_{\rm WSSR}^{\rm th}$ denotes the required WSSR threshold. Accordingly, the secure connectivity indicator is defined as
\[
\mathcal{C}_{\mathrm{sec}}(x)=
\begin{cases}
1, & R_s^{\star}(x)\ge R_{\mathrm{WSSR}}^{\mathrm{th}},\\
0, & \text{otherwise}.
\end{cases}
\]
Here, $\mathcal C_{\rm sec}(x)=1$ means that the ARIS-assisted system can provide secure service at the HST position $x$, while $\mathcal C_{\rm sec}(x)=0$ means that the WSSR requirement cannot be satisfied at that position.

Fi{}g.~{\ref{fig:12}} shows the optimized WSSR versus the HST longitudinal position $x$. The horizontal dashed line represents the required threshold $R_{\rm WSSR}^{\rm th}$. The two boundary positions $x_1$ and $x_2$ are obtained from
\begin{equation*}
  R_s^{\star}(x_1)=R_s^{\star}(x_2)=R_{\rm WSSR}^{\rm th}.
\end{equation*}
Therefore, the maintainable secure connectivity length can be calculated as
\begin{equation*}
  D_{\rm conn}=x_2-x_1.
\end{equation*}
The corresponding service duration is
\begin{equation*}
  T_{\rm conn}=\frac{D_{\rm conn}}{v}.
\end{equation*}

\begin{figure}[!t]
	\centering
	{\includegraphics[scale=0.5]{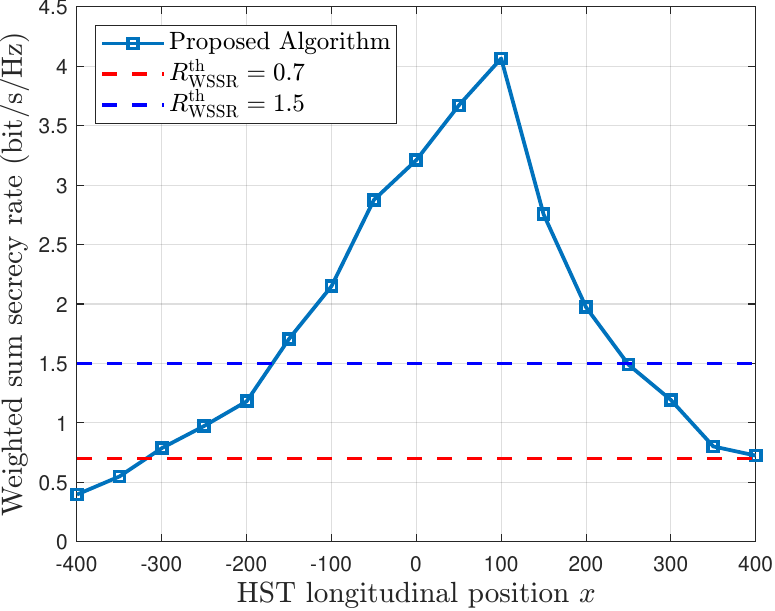}}
	\caption{ \label{fig:12}Weighted sum secrecy rate versus the HST longitudinal position $x$.}
\end{figure}

As shown in Fi{}g.~{\ref{fig:12}}, the optimized WSSR is relatively high when the HST is located within the effective reflected-coverage region of the ARIS, and it gradually decreases as the HST moves away from this region. Under the assumed system parameters, when $R_{\rm WSSR}^{\rm th} = 1.5$ bit/s/Hz, the proposed ARIS-assisted scheme can maintain secure service over approximately $430$ m, corresponding to a service duration of approximately $4.3$ s at $v=360$ km/h. When the threshold is relaxed to $R_{\rm WSSR}^{\rm th} = 0.7$ bit/s/Hz, the maintainable secure connectivity length increases to approximately $720$ m.


\section{Conclusions} \label{con}

This paper investigated secure coverage enhancement for an ARIS-assisted HST communication system. Instead of claiming a new generic optimization model for RIS-assisted secure systems, this work focused on the scenario-specific WSSR maximization problem under ARIS-assisted HST communications. A BCD-based joint optimization algorithm was proposed, where the active beamforming and ARIS phase shift were updated using SCA and ADMM, respectively. The proposed design was further extended to imperfect CSI and discrete phase-shift cases. Numerical results verified that joint active beamforming and ARIS phase shift optimization can effectively improve the secrecy performance compared with benchmark schemes. 
In future work, we will further investigate the statistical robustness and practical implementation of ARIS-assisted secure HST transmission, including multi-run performance evaluation, confidence-interval characterization, robust beamforming under imperfect Doppler compensation, channel aging, outdated CSI, and more practical UAV deployment constraints such as trajectory control, energy limitation, synchronization, duty-cycle-aware service scheduling, and timetable-based ARIS activation.

\appendices

\section{The proof of Theorem 1} \label{appa}
Initially, when updating $\{\mathbf{w}_{k}\}_{k=1}^{K}$ in the $(q+1)$th iteration, an upper bound $R_s^{ub1}$ for the WSSR $R_s$ around given point $\left(\left\{\mathbf{w}_{k}^{q}\right\}_{k=1}^{K},{\boldsymbol{\theta}}^{q}\right)$ is expressed as 
  \begin{align} 
  & R_s^{ub1}\left(\left\{\mathbf{w}_{k}\right\}_{k=1}^{K},{\boldsymbol{\theta}}^{q}\right) \nonumber \\
  & = \sum_{k=1}^K{\varrho _k \left\{  \frac{c_12\Re \{\mathbf{w}_{k}^{\mathrm{H}}\bar{\mathbf{H}}_{k}^{\mathrm{H}}\boldsymbol{\theta }^q\left( \boldsymbol{\theta }^q \right) ^{\mathrm{H}}\bar{\mathbf{H}}_k\mathbf{w}_{k}^{q}\}}{\ln 2} \right.}   \nonumber \\
   &-\frac{c_2 \left( \sum\nolimits_{j=1,j\ne k}^K  {\left| \left( \boldsymbol{\theta }^q \right) ^{\mathrm{H}}\bar{\mathbf{H}}_k\mathbf{w}_j \right|^2}+1 +\left| \left( \boldsymbol{\theta }^q \right) ^{\mathrm{H}}\bar{\mathbf{H}}_k\mathbf{w}_k \right|^2 \right)}{\ln 2}  \nonumber \\  
   &-\frac{z_{e,k}}{\left( 1+\tilde{z}_{e,k} \right) \ln 2}+\frac{2\Re \left\{ \mathbf{W}^{\mathrm{H}}\Lambda _{e,k}\Lambda _{e,k}^{\mathrm{H}}\mathbf{W}^q \right\}}{\ln 2}  \nonumber  \\ 
   &\left. -\frac{\left\| \Lambda _{e,k}^{\mathrm{H}}\mathbf{W}^q \right\| ^2\left\| \Lambda _{e,k}^{\mathrm{H}}\mathbf{W} \right\| ^2}{\left( 1+\left\| \Lambda _{e,k}^{\mathrm{H}}\mathbf{W}^q \right\| ^2 \right) \ln 2} + L\right\}, 
\end{align}
where $L$ is the constant term, which can be given as
\begin{align}
  L=&\log_2\left( 1+c_1\left| \left( \boldsymbol{\theta }^q \right) ^{\mathrm{H}}\bar{\mathbf{H}}_k\mathbf{w}_{k}^{q} \right|^2 \right) -\frac{c_1\left| \left( \boldsymbol{\theta }^q \right) ^{\mathrm{H}}\bar{\mathbf{H}}_k\mathbf{w}_{k}^{q} \right|^2}{\ln 2} \nonumber \\
 &-\log_2\left( 1+\tilde{z}_{e,k} \right) +\frac{\tilde{z}_{e,k} }{\left( 1+\tilde{z}_{e,k} \right) \ln 2}.
\end{align}

Similarly, during the update of $\boldsymbol{\theta }$ at the $(q+1)$th iteration, the following upper bound $R_s^{ub2}$ of $R_s$ around given point $\left(\left\{\mathbf{w}_{k}^{q+1}\right\}_{k=1}^{K},{\boldsymbol{\theta}}^{q}\right)$ can be expressed as
\begin{align} 
  &R_s^{ub2}\left(\left\{\mathbf{w}_{k}^{q+1}\right\}_{k=1}^{K},{\boldsymbol{\theta}}\right) \nonumber \\
  & = \sum_{k=1}^K{\varrho _k\left\{ \frac{c_12\Re \{\boldsymbol{\theta }^{\mathrm{H}}\bar{\mathbf{H}}_k\mathbf{w}_{k}^{q+1}\left( \mathbf{w}_{k}^{q+1} \right) ^{\mathrm{H}}\bar{\mathbf{H}}_{k}^{\mathrm{H}}\boldsymbol{\theta }^q\}}{\ln 2} \right.} \nonumber \\
  & -\frac{c_2\left( \sum\nolimits_{j=1,j\ne k}^K {\left| \boldsymbol{\theta }^{\mathrm{H}}\bar{\mathbf{H}}_k\mathbf{w}_{j}^{q+1} \right|^2}+1 +\left| \boldsymbol{\theta }^{\mathrm{H}}\bar{\mathbf{H}}_k\mathbf{w}_{k}^{q+1} \right|^2 \right)}{\ln 2} \nonumber \\
  & -\frac{\sum\nolimits_{j=1}^K{\left| \boldsymbol{\theta }^{\mathrm{H}}\bar{\mathbf{H}}_{\mathrm{E}}\mathbf{w}_{j}^{q+1} \right|^2}}{\left( 1+\tilde{z}_{e,k} \right) \ln 2}+\frac{2\Re \left\{ \boldsymbol{\theta }^{\mathrm{H}}\Upsilon _{e,k}\Upsilon _{e,k}^{\mathrm{H}}\boldsymbol{\theta }^q \right\}}{\ln 2}  \nonumber    \\
  & \left. -\frac{\left\| \Upsilon _{e,k}^{\mathrm{H}}\boldsymbol{\theta }^q \right\| ^2\left\| \Upsilon _{e,k}^{\mathrm{H}}\boldsymbol{\theta } \right\| ^2}{\left( 1+\left\| \Upsilon _{e,k}^{\mathrm{H}}\boldsymbol{\theta }^q \right\| ^2 \right) \ln 2} + L \right\}.
\end{align}

With $R_s^{ub1}\left(\left\{\mathbf{w}_{k}\right\}_{k=1}^{K},{\boldsymbol{\theta}}^{q}\right)$ and $R_s^{ub2}\left(\left\{\mathbf{w}_{k}^{q+1}\right\}_{k=1}^{K},{\boldsymbol{\theta}}\right)$, we obtain
\begin{align}
  &R_s\left( \left\{ \mathbf{w}_{k}^{q} \right\}_{k=1}^{K},{\boldsymbol{\theta}}^q \right) \underset{(\mathrm{a)}}{\le}\max_{\left\{ \mathbf{w}_k \right\} _{k=1}^{K}} R_{s}^{ub1}\left( \left\{ \mathbf{w}_k \right\} _{k=1}^{K},{\boldsymbol{\theta}}^q \right) \nonumber \\
 & \underset{(\mathrm{b)}}{\le}\max_{\hat{\boldsymbol{\theta}}} R_{s}^{ub2}\left( \left\{ \mathbf{w}_{k}^{q+1} \right\} _{k=1}^{K},{\boldsymbol{\theta}} \right)  \le R_s\left( \left\{ \mathbf{w}_{k}^{q+1} \right\} _{k=1}^{K},{\boldsymbol{\theta}}^{q+1} \right),
\end{align}
where $\left(a\right)$ and $\left(b\right)$ hold since $\left\{\mathbf{w}_{k}^{q+1}\right\}_{k=1}^{K}$ and $\boldsymbol{\theta}^{q+1}$ are the optimal solutions of the convex problems $\mathcal{P}_2$ and $\mathcal{P}_5$, respectively.

Furthermore, due to \eqref{yyaa2} and \eqref{yyaa3}, we know that $\left\{\mathbf{w}_{k}^{q}\right\}_{k=1}^{K}$ and $\boldsymbol{\theta}^{q}$ are both bounded. According to  Cauchy's theorem \cite{r14}, the sequence $\left(\left\{\mathbf{w}_{k}^{q}\right\}_{k=1}^{K}, \boldsymbol{\theta}^{q}\right)$ converges to $\left(\left\{\mathbf{w}_{k}^{\ast}\right\}_{k=1}^{K}, \boldsymbol{\theta}^{\ast}\right)$ as $q\rightarrow \infty $, i.e.,
\begin{align}
  & 0=\lim_{q\to\infty} \left\{ R_{s} \left(\left\{ \mathbf{w}_{k}^{q}\right\}_{k=1}^{K},{ \boldsymbol{\theta}}^{q}\right) - R_{s}\left( \left\{\mathbf{w}_{k}^{\ast} \right\}_{k=1}^{K},{\boldsymbol{\theta}}^{\ast}\right) \right\} \nonumber    \\ 
   &\leq\lim_{q\to\infty}\left\{R_{s}\left(\left\{\mathbf{w}_{k}^{q+ 1}\right\}_{k=1}^{K},{\boldsymbol{\theta}}^{q+1}\right) - R_{s}\left(\left\{\mathbf{w}_{k}^{\ast}\right\}_{k= 1}^{K},{\boldsymbol{\theta}}^{\ast}\right)\right\}.
\end{align}
Therefore, we have demonstrated that $ R_{s}\left(\left\{\mathbf{w}_{k}^{q}\right\}_{k= 1}^{K},{\boldsymbol{\theta}}^{q}\right) \leq  R_{s}\left(\left\{\mathbf{w}_{k}^{q+1}\right\}_{k= 1}^{K},{\boldsymbol{\theta}}^{q+1}\right)$, which can guarantee to converge to a locally optimal point.

\section{The proof of Theorem 2} \label{appb}
As stated earlier, the sequence $\left(\left\{\mathbf{w}_{k}^{q}\right\}_{k=1}^{K}, \boldsymbol{\theta}^{q}\right)$ will converge to $\left(\left\{\mathbf{w}_{k}^{\ast}\right\}_{k=1}^{K}, \boldsymbol{\theta}^{\ast}\right)$ as $q\rightarrow \infty $. Next, the Lagrangian function of $\mathcal{P}_2$ is given as
\begin{align}
  \mathcal{L}_{1}\left(\left\{\mathbf{w}_{k}\right\}_{k=1}^{K},{\boldsymbol{\theta }}^{q},\zeta\right)&=R_{s}^{ub1}\left(\left\{\mathbf{w}_{k}\right\}_{k=1}^{K},{\boldsymbol{\theta}}^{q}\right) \nonumber \\
  &\quad+\zeta\left(\sum_{k=1}^K\left\|\mathbf{w}_k\right\|^2-P_T\right),
\end{align}
where $\zeta \geq 0 $ is the dual variable  in \eqref{yyaa2}.

Then, when $q\rightarrow \infty $, the related KKT conditions are given as follows:
\begin{align}
  \left\{ 
    \begin{array}{l}
    \nabla _{\mathbf{w}_k}R_{s}^{ub1}\left\{ \mathbf{w}_{k}^{\ast}\}_{k=1}^{K},{\boldsymbol{\theta}}^q \right) +2\zeta^{\ast}\mathbf{w}_{k}^{\ast}=\mathbf{0}, \forall k \\
    \zeta ^{\ast}\left( \sum_{k=1}^K{\left\| \mathbf{w}_{k}^{\ast} \right\| ^2}-P_T \right) =0.
  \end{array} \right. 
\end{align}

Similarly, the Lagrangian function of $\mathcal{P}_4$ is 
\begin{align}
  \mathcal{L}_{2}\left(\left\{\mathbf{w}_{k}^q\right\}_{k=1}^{K},{\boldsymbol{\theta  }},\boldsymbol{\varsigma } \right)&=R_{s}^{ub2}\left(\left\{\mathbf{w}_{k}^q\right\}_{k=1}^{K},{\boldsymbol{\theta}}\right) \nonumber \\
&\quad +\sum_{n=1}^N\varsigma_n\left({\theta}_n^\dagger{\theta}_n-1\right),
\end{align}
where $\boldsymbol{\varsigma } = \left[\varsigma_1 \geq 0,\cdots, \varsigma_N \geq 0\right] $ is the dual variable  in \eqref{P4b}. Then the KKT condition for $\boldsymbol{\theta}$ can be expressed as follows:
\begin{align}
  \left\{ 
    \begin{array}{l}
    \nabla_{{\theta}_{n}}R_{s}^{ub2}\left(\left\{\mathbf{w}_{k}^{q}\right\}_{k=1}^{K},{\boldsymbol{\theta}}^{\ast}\right)+\varsigma_{n}^{\ast}{\theta}_{n}^{\ast}=0,\forall n, \\
    \varsigma_{n}^{\ast}\left(\left({\theta}_{n}^{\ast}\right)^{\dagger}{\theta}_{n}^{\ast}-1\right)=0,\forall n.
  \end{array} \right. 
\end{align}
Since $\left\{\mathbf{w}_{k}^{\ast}\right\}_{k=1}^{K}$  and $\boldsymbol{\theta}^{\ast}$ are the optimal solutions of $\mathcal{P}_2$ and $\mathcal{P}_5$, respectively, thus the above KKT conditions are satisfied. Therefore, the converged solution $\left(\left\{\mathbf{w}_{k}^{\ast}\right\}_{k=1}^{K},{\boldsymbol{\theta}}^{\ast}\right)$ is a KKT point.

\end{document}